\documentclass[12pt,a4paper]{article}

\usepackage{amsmath}
\usepackage{amssymb}
\usepackage{amsfonts}
\usepackage{amsthm}
\usepackage{mathtools}
\usepackage{graphicx}
\usepackage{float}
\usepackage{enumitem}
\usepackage{booktabs}
\usepackage{multirow}
\usepackage{array}
\usepackage{color}
\usepackage{hyperref}
\usepackage{geometry}
\allowdisplaybreaks

\DeclareMathOperator{\argmax}{arg\,max}

\newcommand{\E}{\mathbb{E}}

\newcommand{\TV}{\mathrm{TV}}
\newcommand{\psib}{\psi_{\mbox{\tiny b}}(\beta,R)}

\newtheorem{theorem}{Theorem}[section]
\newtheorem{lemma}[theorem]{Lemma}

\newtheorem{definition}[theorem]{Definition}

\title{A Statistical-Physics Refinement of Soft Covering}

\author{Neri Merhav\\[6pt]
\small The Viterbi Faculty of Electrical and Computer Engineering,\\
\small Technion---Israel Institute of Technology, Technion City,\\
\small Haifa 3200003, Israel\\
\small \texttt{merhav@ee.technion.ac.il}}

\date{}

\begin{document}

\maketitle

\begin{abstract}
We study the channel output distribution induced by a
random code of rate $R$,
from the perspective of statistical 
physics. The central object is the partition function 
$Z_n(\beta|\mathcal{C}) = \sum_{y^n}[P_{Y^n|\mathcal{C}}(y^n)]^\beta$, 
where $y^n$ is the channel output vector, $\mathcal{C}$ is the code, and $\beta > 0$ plays the role of inverse temperature. 
More precisely, our focus is on the associated \emph{annealed free energy},
$\psi(\beta,R) = \lim_{n\to\infty}\frac{1}{n}\log\E[Z_n(\beta|\mathcal{C})]$, 
where the expectation is with respect to the randomness of $\mathcal{C}$.
This quantity encodes the full R\'{e}nyi spectrum of the output distribution. 
The single-letter formula derived for the annealed free energy
decomposes into two branches which reflect a ``competition'' between two populations 
of codewords. One is the \emph{bulk branch}, $\psi_{\mbox{\tiny b}}(\beta,R)$, 
which is driven by typical codewords
and the other one is the \emph{sparse branch}
$\psi_{\mbox{\tiny s}}(\beta,R)$, which is driven 
by a-typical codewords, where the qualifiers `typical' and `atypical' are in a
sense to become apparent later.
We analyze the phase structure of each branch separately and characterize 
their competition. Both branches are derived for all $\beta > 0$. The phase 
boundary $R^\star(\beta)$, where the two branches are equal, is analyzed for 
$\beta \geq 1$, where it has an explicit closed-form expression. 
The phase diagram in the first quadrant of the $(\beta, R)$ plane has four regions separated 
by three boundaries: $R = I^{\mbox{\tiny b}}(\beta)$ (bulk branch transition), 
$R = R^\star(\beta)$ (bulk--sparse competition boundary), 
and $R = I^{\mbox{\tiny s}}(\beta)$ (sparse branch transition), all meeting at 
the point $(\beta, R) = (1, I(X;Y))$, where $I(X;Y)$ is the mutual information
induced by the input type and the channel. Applications to guesswork, channel resolvability, 
and hypothesis testing are discussed, and all results are 
illustrated with a numerical example of a Z-channel.
\end{abstract}

\medskip
\noindent\textbf{Keywords:} random coding; soft covering; channel resolvability; free energy;
phase transitions; annealed free energy; R\'{e}nyi entropy; statistical mechanics;
guesswork; hypothesis testing

\tableofcontents
\newpage

\setcounter{section}{0}

\section{Introduction}
\label{intro}

Given a codebook $\mathcal{C} = \{x^{n}(m)\}_{m=1}^{M}$ of size $M = e^{nR}$ and a 
discrete memoryless channel (DMC), $\{W(y|x)\}$, the induced channel output
distribution is given by the mixture

\begin{equation}
  P_{Y^{n}}(y^{n}) = \frac{1}{M} \sum_{m=1}^{M} W^{n}(y^{n}|x^{n}(m)).
  \label{eq:output-dist}
\end{equation}
This object is central to channel resolvability~\cite{han1993} and the soft-covering problem
(see, e.g., \cite{cuff16}, \cite{hayashi2006}, \cite{hou2014} and references
therein). Classical results, in this context, focus on a single threshold: 
if $R > I(X;Y)$, $I(X;Y)$ being the mutual information induced by the input distribution and
the channel $W$,
then the distribution $P_{Y^{n}}$ converges to the i.i.d.\ product law
$P_{Y}^{\otimes n}$ in total variation as well as in some other metrics between
probability distributions.

This description, however, captures average behavior only. It says nothing about the internal
geometry of $P_{Y^{n}}$: how the probability mass is distributed across output sequences, how many codewords
support a typical and an atypical output, 
or how the self-information, $-\frac{1}{n}\log P_{Y^{n}}(y^{n})$, fluctuates. These questions
require going along a deeper journey, beyond the Shannon entropy and mutual information.

In this work, we study $P_{Y^{n}}$ through the function,

\begin{equation}
  Z_{n}(\beta) = \sum_{y^{n}} [P_{Y^{n}}(y^{n})]^{\beta}=\sum_{y^{n}}
\exp\{-\beta\log[1/P_{Y^n}(y^n)]\}, \quad \beta > 0,
  \label{eq:partition}
\end{equation}
where the second representation is readily recognized as a \emph{partition
function}, with $\beta > 0$ playing the role of \emph{inverse temperature},
and $\log[1/P_{Y^n}(y^n)]=-\log P_{Y^{n}}(y^{n})$ being the \emph{energy
function} (a.k.a.\ the \emph{Hamiltonian}) associated with every
\emph{micro-state} $y^{n}$. In other words, this is identified as the canonical partition function 
of a statistical-mechanical system whose micro-states are channel output sequences.
Two special values bracket the range: $Z_n(1) = 1$ (normalization, $\beta=1$) 
and $\lim_{\beta\to\infty}[Z_n(\beta)]^{1/\beta} =\max_{y^n}P_{Y^n}(y^n)$ (the
mode, or the ground state in the physics jargon). 
In general, $\frac{1}{1-\beta}\log Z_n(\beta)$ is exactly the definition of $H_\beta(P_{Y^n})$, 
the R\'{e}nyi entropy of order $\beta \neq 1$, pertaining to the output distribution $P_{Y^n}$, 
which is non-increasing in $\beta$.
The associated \emph{free energy} $\psi(\beta,R)$ encodes the exponential
growth/decay rate of $Z_n(\beta)$
as a function of $n$.
Since $Z_n(\beta)$ depends on the random codebook, 
and actually, should be denoted $Z_n(\beta|\mathcal{C})$,
we study the free energy 
behavior for the average code, which is identified with the \emph{annealed free energy},

\begin{equation}
  \psi(\beta,R) = \lim_{n\to\infty}\frac{1}{n}\log\E\{Z_n(\beta|\mathcal{C})\},
\end{equation}
provided that the limit exists. Here,
$\E\{\cdot\}$ denotes the expectation operator with respect to (w.r.t.) the randomness of the
code $\mathcal{C}$.
Its phase structure --- the rich dependence on $\beta$ and $R$ --- is the main subject
of this work.

It should be emphasized that many earlier
statistical-mechanical analyses of random coding, were based on Derrida's random
energy model (REM) (see, e.g.,
\cite{MM09}, \cite{merhav2010} and references therein). These are based on partition functions
whose micro-states were the channel inputs (i.e., the codewords) for a fixed
channel output
sequence $y^{n}$, that is,

\begin{equation}
  Z_n^{\mathrm{input}}(\beta) = \sum_{m=1}^{M} 
[W^{n}(y^{n}|x^{n}(m))]^\beta,
  \label{eq:zinput}
\end{equation}
for fixed $y^n$, whereas, here the micro-states are the channel outputs, as mentioned earlier.
This difference is significant because here,
each energy term, $-\log P_{Y^{n}}(y^{n})$, is itself a log-partition function over codewords. This two-level structure
makes the problem considerably harder and richer than that of an ordinary REM.

The main contributions of this work are as follows.
We derive the annealed free energy and analyze the phase structure 
of its two branches (Theorems~\ref{thm:annealed} and~\ref{thm:phase}).
The free energy decomposes into a \emph{bulk branch} and a \emph{sparse branch},
reflecting a competition between two populations of codewords.
The bulk branch is driven by \emph{typical} codewords
and the sparse branch is driven by \emph{atypical} codewords, where the
meanings of the qualifiers `typical' and `atypical' will become apparent in
the sequel.
Both branches are derived for all $\beta > 0$ and analyzed separately.
The bulk branch, $\psi_{\mbox{\tiny b}}(\beta,R)$ (Section~\ref{sec:annealed}),
has a single phase transition at
a critical rate $R = I^{\mbox{\tiny b}}(\beta)$---the mutual information of the bulk optimizer
(the unconstrained optimizer of the bulk branch)---where a rate constraint
changes from inactive to active. A similar comment applies to the sparse
branch, $\psi_{\mbox{\tiny s}}(\beta,R)$, whose phase boundary is $R=I^{\mbox{\tiny s}}(\beta)$.
The total annealed free energy $\psi(\beta,R)$ has 
a phase boundary $R^\star(\beta)$ given by an explicit closed-form formula
(Theorem~\ref{thm:phase}(b), Section~\ref{sec:annealed}).
Together, the three boundaries divide the $(\beta, R)$ quadrant $\{R\ge 0,~\beta \geq 1\}$
into four regions, denoted, A, B, C, and D.
By construction, $\psi(\beta,R) \geq \psi_{\mbox{\tiny b}}(\beta,R)$ always,
with equality in region~A and
strict inequality in regions~B, C, and D, where the sparse branch dominates.
For the sparse branch, a fully explicit closed form formula, in terms of channel transition
probabilities and $\beta$ alone, holds whenever the rate is below
$I^{\mbox{\tiny s}}(\beta)$
(see eq.\ (\ref{eq:sparse-closed})).
All results are illustrated in a numerical example in
Section~\ref{sec:zchannel}.

A few words are in order with regard to earlier related work and the differences relative to the
present work. The model~\eqref{eq:output-dist} is the canonical object of channel resolvability 
and the soft-covering problem. Han and Verd\'{u}~\cite{han1993} established the 
resolvability threshold $R = I(X;Y)$ under total variation, 
$\|P_{Y^{n}} - P_{Y}^{\otimes n}\|_{\TV} \to 0$ for $R > I(X;Y)$, 
and showed the threshold is tight. Hayashi~\cite{hayashi2006} derived exponential convergence rates 
under total variation and normalized relative entropy, a.k.a.\ the
Kullback-Leibler (KL) divergence. 
Hou and Kramer~\cite{hou2014} extended the analysis to the 
R\'{e}nyi divergence $D_{\alpha}(P_{Y^{n}} \| P_{Y}^{\otimes n})$, 
showing the critical rate remains $I(X;Y)$ for all $\alpha \in (0,\infty)$ 
but the exponents differ. 
Yu and Tan~\cite{yu2020} derived exact error and strong-converse exponents 
for the soft-covering problem under KL divergence and total variation.
In a slightly earlier but closely related and highly relevant work
\cite{yu2019}, the same
authors characterized the \emph{R\'{e}nyi resolvability}---the minimum rate $R$ 
required for the R\'{e}nyi divergence $D_\alpha(P_{Y^n|\mathcal{C}}\|P_Y^{\otimes n})$ 
to vanish asymptotically. They showed that for $\alpha \leq 1$ the 
threshold remains $I(X;Y)$, while for $\alpha > 1$ it is strictly larger than $I(X;Y)$ and depends on $\alpha$. 

Our work is complementary to~\cite{yu2019}: rather than characterizing the \emph{threshold rate} 
at which a divergence to $P_Y^{\otimes n}$ vanishes, we study the exact value of the 
annealed free energy at every rate $R$ and inverse temperature $\beta$, revealing 
a two-branch phase diagram with three distinct phase boundaries. We analyze the \emph{below-threshold} 
regime with the same precision as the above-threshold regime, 
and expose qualitative phenomena---condensation, sparse-branch dominance, 
and the annealed/quenched gap---invisible to divergence-based analyses. 
In particular, our bulk branch boundary, $R = I^{\mbox{\tiny b}}(\beta)$, for $\beta > 1$ 
is precisely the R\'{e}nyi resolvability threshold of~\cite{yu2019}, 
now seen as one of three phase boundaries in a richer thermodynamic landscape.

In general, all earlier results, in this context, share a common feature: 
they measure how close $P_{Y^{n}}$ is to a \emph{target} distribution $P_{Y}^{\otimes n}$ 
via some metric, and they all identify $R = I(X;Y)$ 
as the single relevant threshold. The present paper asks a fundamentally different and complementary question: not how close $P_{Y^{n}}$ is to a target, but what is the \emph{internal geometry} of $P_{Y^{n}}$ itself?

Our partition function $Z_{n}(\beta)$ encodes the R\'{e}nyi structure of $P_{Y^{n}}$ 
without reference to any external target distribution, and reveals phenomena invisible 
to total variation or KL divergence. It turns out that the classical
threshold, $R = I(X;Y)$, is only the \emph{beginning} of the story: 
the $(\beta, R)$ quadrant $\{\beta \geq 1,~R\ge 0\}$ is divided into \emph{four} 
distinct regions by three phase boundaries, all passing through the point $(\beta, R) = (1, I(X;Y))$. Concretely:
\begin{itemize}
  \item Even for $R > I(X;Y)$ (where $\|P_{Y^{n}}-P_Y^{\otimes n}\|_{\TV}\to 0$),
    the annealed free energy reveals a \emph{condensed phase}: when $R < I^{b}(\beta)$
    for some $\beta>1$, the bulk branch of $\psi$ has an active rate
    constraint, signalling that probability mass concentrates on outputs supported
    by sub-exponentially few codewords --- output condensation invisible to total
    variation distance.
  \item The annealed free energy has a phase boundary $R = R^\star(\beta)$
    where the sparse branch takes over from the bulk branch, signalling that
    the atypical codewords dominate the ensemble average.
  \item Since $\log Z_n(\beta|\mathcal{C}) = (1-\beta)\,H_\beta(P_{Y^n|\mathcal{C}})$
    exactly for every fixed codebook (by the definition of R\'{e}nyi entropy),
    the bulk branch $\psi_{\mbox{\tiny b}}(\beta,R)$ encodes the R\'{e}nyi entropy rate
    $H_\beta(P_{Y^n|\mathcal{C}})/n$ of the output mixture for a typical codebook
    (for $\beta\geq 1$ and $R>R^\star(\beta)$, this is proved in
    Theorem~\ref{thm:self-averaging}; see also Appendix).
    The full annealed free energy thus encodes the complete R\'{e}nyi-order profile of the
    output distribution, of which classical soft-covering ($\beta\to 1$) is a single cross-section.
\end{itemize}
In summary, this work does not contradict soft-covering results---it refines 
and extends them by revealing the fine structure of $P_{Y^{n}}$ that 
lies beneath the single threshold $R = I(X;Y)$.
These structural findings have direct operational consequences, which we now
describe.\\

\noindent
1.~{\em Guesswork and R\'{e}nyi entropy.}
Arikan~\cite{arikan1996} showed that guesswork moments $\E[G(Y^{n})^{s}]$ grow 
as $e^{nsH_{1/(1+s)}(P_{Y^{n}})}$. The annealed free energy encodes the R\'{e}nyi entropy rate of 
the output mixture, extending Ar{\i}kan's analysis to the case where the source is itself 
a random-coding mixture. The phase structure of 
$\psi(\beta,R)$ produces regime changes in guessing complexity absent in the
i.i.d.\ case.\\

\noindent
2.~{\em Hypothesis testing.}
The Chernoff exponent~\cite{blahut1974} for testing $P_{Y^n|\mathcal{C}}$
against $P_Y^{\otimes n}$ equals
$\xi(R)=\max_{0\leq\beta\leq 1}[(1-\beta)\log|\mathcal{Y}|-\psi(\beta,R)]$,
directly connecting our free energy to the optimal test.
With $P_Y$ being uniform, $\xi(R)=0$ for $R\geq I(X;Y)$,
meaning the two distributions are not exponentially distinguishable at and above
the soft-covering threshold.
The phase structure of $\psi(\beta,R)$ produces qualitative regime changes in $\xi(R)$
as $R$ varies, and connects to the R\'{e}nyi resolvability results of
Yu and Tan~\cite{yu2019}; see Section~\ref{sec:applications}.\\

\noindent
3.~{\em Statistical mechanics of codes.}
Sourlas~\cite{sourlas1989} established the connection between linear codes and spin-glass models. 
Montanari~\cite{montanari2001} analyzed the phase transition in turbo codes. 
M\'ezard and Montanari \cite{MM09} (Chapters 5 and 6) and
Merhav~\cite{merhav2010} provided a comprehensive treatment of random coding 
via statistical physics, with the partition function summing over codewords (inputs) 
for a fixed received sequence. The present work complements~\cite{merhav2010} 
by placing the partition function over outputs: the resulting two-level 
hierarchical model has a distinct phase structure governed by the code rate $R$ rather than by signal-to-noise ratio.

The outline of the remaining part of this article is as follows.
Section~\ref{sec:model} defines the model and establishes the notation
conventions.
Section~\ref{sec:annealed} derives the annealed free energy and analyzes the phase
structure of its bulk and sparse branches; the phase boundary $R^*(\beta)$
has an explicit closed-form formula for $\beta\geq 1$.
Section~\ref{sec:zchannel} presents the phase diagram (four regions, $\beta \geq 1$).
Section~\ref{sec:applications} discusses applications and implications of our
results. Finally,
Section \ref{sec:conclusion} summarizes the paper and provides an outlook.

\section{Model, Definitions and Notation}
\label{sec:model}

Throughout the paper, $n$ denotes the block length. 
An $n$-vector over alphabet $\mathcal{X}$ is displayed as 
$x^{n} = (x_1, x_2, \ldots, x_n) \in \mathcal{X}^{n}$, 
$\mathcal{X}$ being the single-letter finite alphabet,
and similarly $y^{n} \in \mathcal{Y}^{n}$, where the single-letter alphabet
$\mathcal{Y}$ is also finite.
The empirical distribution (type) of $x^{n}$ is denoted 
$\hat{P}_{x^{n}}(a) = \frac{1}{n}\#\{i:~x_i = a\}$ for all $a \in \mathcal{X}$. 
A similar definition applies to joint types of pairs of sequences,
$(x^n,y^n)$. A DMC $W$ is a stochastic matrix $W:\mathcal{X}\to\mathcal{Y}$
and for sequences $x^{n}$, $y^{n}$, 
we write $W^{n}(y^{n}|x^{n}) = \prod_{i=1}^{n} W(y_i|x_i)$.
Throughout the sequel, all logarithms are natural (base $e$); 
Accordingly, information measures are given in nats. 
We use the standard exponential-equivalence notation $f(n) \doteq g(n)$ to mean 
$\lim_{n\to\infty} \frac{1}{n}\log\frac{f(n)}{g(n)}=0$, 
i.e.\ $f$ and $g$ have the same exponential growth/decay rate. 
The notation of information measures is as follows.
$D(P\|Q)$ the Kullback-Leibler (KL) divergence, 

\begin{equation}
D(P\|Q)=\sum_{x\in\mathcal{X}}P(x)\log\frac{P(x)}{Q(x)}.
\end{equation}
Throughout the paper, $Q_{XY}$ denotes a joint distribution on $\mathcal{X}\times\mathcal{Y}$
with $X$-marginal given by $Q_X=P_X$, $P_X$ being a fixed input distribution
on the finite input alphabet $\mathcal{X}$.
When $Q_{XY}$ needs to appear as a \emph{subscript} in an information measure,
we abbreviate $Q_{XY}$ by $Q$ to avoid cumbersome notation. Thus,
$H_Q(Y)$, $H_Q(Y|X)$, and $I_Q(X;Y)$ stand for
$H_{Q_{XY}}(Y)$, $H_{Q_{XY}}(Y|X)$, and $I_{Q_{XY}}(X;Y)$ respectively, which
are the entropy of $Y$, the conditional entropy of $Y$ given $X$, and the
mutual information between $X$ and $Y$, respectively, all induced by $Q_{XY}$. On the other
hand, when $Q_{XY}$ appears as an argument of a certain functional,
the full notation is used.
The notation $I(X;Y)$ (without subscript) 
refers to the mutual information induced by the fixed single-letter channel input
distribution $P_X$ and the channel transition probability matrix $W$, i.e.\ $I(X;Y) = I_{P_X W}(X;Y)$.
The KL divergence between a conditional distribution $Q_{Y|X}$ and $W$, with
weighting by $P_X$, is defined by

\begin{equation}
D(Q_{Y|X}\|W|P_X)=\sum_{x\in\mathcal{X}}P_X(x)\sum_{y\in\mathcal{Y}}Q_{Y|X}(y|x)\log\frac{Q_{Y|X}(y|x)}{W(y|x)}.
\end{equation}
A random codebook of rate $R > 0$ and block length $n$ is a 
collection $\mathcal{C} = \{x^{n}(m)\}_{m=1}^{M}$, $M = \lfloor e^{nR} \rfloor$, 
where the codewords $x^{n}(m)$ are drawn independently and uniformly at random from the type class

\begin{equation}
  \mathcal{T}(P_{X}) = \bigl\{ x^{n} \in \mathcal{X}^{n} : \hat{P}_{x^{n}}(a) = P_{X}(a)\ \forall\, a \in \mathcal{X} \bigr\},
\end{equation}
the set of all $n$-sequences with empirical distribution exactly $P_{X}$.
Given codebook $\mathcal{C}$, the induced output distribution is

\begin{equation}
  P_{Y^{n}|\mathcal{C}}(y^{n}) = \frac{1}{M} \sum_{m=1}^{M} W^{n}(y^{n}|x^{n}(m)).
\end{equation}
For $\beta > 0$, the partition function is defined as

\begin{equation}
  Z_{n}(\beta|\mathcal{C}) = \sum_{y^{n}} [P_{Y^{n}|\mathcal{C}}(y^{n})]^{\beta}.
  \label{eq:Zndef}
\end{equation}
The \emph{annealed free energy} is

\begin{equation}
  \psi(\beta, R) = \lim_{n\to\infty} \frac{1}{n} \log \E_{\mathcal{C}}\bigl[Z_{n}(\beta|\mathcal{C})\bigr],
  \label{eq:psi-ann}
\end{equation}
assuming that the limit exists.
For a joint type $Q_{XY}$ on $\mathcal{X} \times \mathcal{Y}$ with marginal $Q_{X} = P_{X}$, define:

\begin{equation}
  \ell(Q_{XY}) = H_{Q}(Y|X) + D(Q_{Y|X}\|W|P_{X}) \label{eq:ellQ}.
\end{equation}
Note that for $(x^n,y^n)$ of joint type $Q_{XY}$,
$W^n(y^n|x^n)=e^{-n\ell(Q_{XY})}$.
We shall often use the identity

\begin{equation}
  I_{Q}(X;Y) + \ell(Q_{XY}) = H_{Q}(Y) + D(Q_{Y|X}\|W|P_{X}). \label{eq:identity}
\end{equation}
Fix an output sequence $y^{n}$ of type $Q_{Y}$. The type-class enumerator
$N(Q_{XY}|y^n)$ counts how many codewords have a joint type (joint empirical
distribution) $Q_{XY}$ with $y^n$:

\begin{equation}
  N(Q_{XY}|y^{n}) = \#\bigl\{m:~(x^{n}(m), y^{n}) \in \mathcal{T}(Q_{XY})\bigr\},
\end{equation}
where $\mathcal{T}(Q_{XY})$ is the joint type class associated with the joint
distribution $Q_{XY}$.

Using the method of types~\cite{csiszar2011},
\cite{csiszar1998}, it is readily observed that
given $y^n$, each randomly selected codeword, $x^{n}(m)$, satisfies $(x^{n}(m), y^{n}) \in \mathcal{T}(Q_{XY})$ with probability

\begin{equation}
\frac{|\mathcal{T}(Q_{X|Y}|y^{n})|}{|\mathcal{T}(P_{X})|} \doteq \frac{e^{nH_{Q}(X|Y)}}{e^{nH_{Q}(X)}} = e^{-nI_{Q}(X;Y)},
\end{equation}
where $\mathcal{T}(Q_{X|Y}|y^{n}) = \{x^{n}:~(x^{n}, y^{n}) \in \mathcal{T}(Q_{XY})\}$ 
is the conditional type class, and where we have used the fact that $Q_{X} = P_{X}$. Since codewords
are drawn independently at random from $\mathcal{T}(P_X)$,
it is clear that $N(Q_{XY}|y^{n})$ is a binomial random variable with $M$ trials and
a probability of single success of the exponential order of $e^{-nI_{Q}(X;Y)}$.

\section{Annealed Free Energy and its Phase Structure}
\label{sec:annealed}

In this section, we derive a single-letter expression for the
annealed free energy and investigate its phase structure in the
$(\beta,R)$ plane. The first main result is the following.

\begin{theorem}[Annealed Free Energy, $\beta > 0$]
\label{thm:annealed}
For all $\beta > 0$:

\begin{equation}
  \psi(\beta, R) = \max\!\bigl\{\psi_{\mbox{\tiny b}}(\beta, R),\,
\psi_{\mbox{\tiny s}}(\beta, R)\bigr\},
\end{equation}
with

\begin{align}
  \psi_{\mbox{\tiny b}}(\beta, R) &= \sup_{\substack{Q_{XY}:\, Q_{X}=P_{X}\\ I_{Q}(X;Y)\leq R}} 
\bigl[H_Q(Y)-\beta(I_{Q}(X;Y)+\ell(Q_{XY})\bigr], \label{eq:psi-bulk}\\[6pt]
  \psi_{\mbox{\tiny s}}(\beta, R) &= R(1-\beta) + 
\sup_{\substack{Q_{XY}:\, Q_{X}=P_{X}\\ I_{Q}(X;Y)>R}}\{H_{Q}(Y|X) -
\beta\ell(Q_{XY})\}. \label{eq:psi-sparse}
\end{align}
\end{theorem}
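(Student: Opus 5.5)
The plan is the standard type-enumerator (second-level REM) computation. I reduce $\E[Z_n(\beta|\mathcal{C})]$ to a sum over joint types, estimate the $\beta$-th moment of the binomial enumerators $N(Q_{XY}|y^n)$ in the two regimes $I_Q(X;Y)<R$ and $I_Q(X;Y)>R$, and then pass from types to distributions.

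\emph{Step 1: reduction to enumerators.} By the permutation invariance of the ensemble (codewords uniform on $\mathcal{T}(P_X)$), $\E[P_{Y^n|\mathcal{C}}(y^n)^\beta]$ depends on $y^n$ only through its type $Q_Y$. Hence
\begin{equation*}
\E[Z_n(\beta|\mathcal{C})]=\sum_{Q_Y}|\mathcal{T}(Q_Y)|\,\E\bigl[P_{Y^n|\mathcal{C}}(y^n)^\beta\bigr],
\end{equation*}
where $y^n$ is any fixed representative of $\mathcal{T}(Q_Y)$. Next write
\begin{equation*}
P_{Y^n|\mathcal{C}}(y^n)=\frac{1}{M}\sum_{Q_{X|Y}}N(Q_{XY}|y^n)\,e^{-n\ell(Q_{XY})},
\end{equation*}
where the sum runs over conditional types compatible with $Q_X=P_X$. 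There are only polynomially many terms, say $k_n$. For nonnegative $a_i$ and any $\beta>0$,
\begin{equation*}
\max_i a_i^\beta\le\Bigl(\sum_i a_i\Bigr)^\beta\le k_n^\beta\max_i a_i^\beta,
\qquad \max_i a_i^\beta\le\sum_i a_i^\beta\le k_n\max_i a_i^\beta.
\end{equation*}
Taking expectations therefore gives
\begin{equation*}
\E\bigl[P_{Y^n|\mathcal{C}}(y^n)^\beta\bigr]\doteq M^{-\beta}\sum_{Q_{X|Y}}e^{-n\beta\ell(Q_{XY})}\,\E\bigl[N(Q_{XY}|y^n)^\beta\bigr].
\end{equation*}
This removes the dependence among different $N$'s: only marginal moments of single binomials are needed.

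\emph{Step 2: moments of a binomial.} Here $N\sim\mathrm{Bin}(M,p)$ with $M\doteq e^{nR}$ and $p\doteq e^{-nI}$, where $I=I_Q(X;Y)$. I will show
\begin{equation*}
\E[N^\beta]\doteq e^{n\beta(R-I)}\ \text{if } I\le R,\qquad \E[N^\beta]\doteq e^{n(R-I)}\ \text{if } I>R.
\end{equation*}

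For $I<R$:
\begin{itemize}
\item The lower bound follows from Chebyshev, since $\mathrm{Var}\,N\le Mp$, so $N\ge Mp/2$ with probability tending to $1$.
\item The upper bound splits on the event $N\le e^{n(R-I+\epsilon)}$. On its complement, which has doubly exponentially small probability by a Chernoff bound, I use the crude bound $N\le M$.
\end{itemize}

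For $I>R$:
\begin{itemize}
\item The lower bound is $\E[N^\beta]\ge\Pr\{N\ge1\}\ge Mp-(Mp)^2/2\doteq e^{n(R-I)}$.
\item For the upper bound, $\Pr\{N=k\}\le(Mp)^k/k!$, so $\E[N^\beta]\le\sum_{k\ge1}k^\beta(Mp)^k/k!\le Mp\,C_\beta$ once $Mp\le1$.
\end{itemize}

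At $I=R$ (which requires $R$ to be a type-achievable value) both formulas give $\doteq1$, so the two regimes agree at the boundary. I expect this step to be the main technical point. The delicate issues are that the estimates must hold uniformly over the polynomially many types, that $\beta<1$ requires handling the concavity of $N^\beta$ (Chebyshev still suffices), and that the exact count $M=\lfloor e^{nR}\rfloor$ and the exact $p$ (with polynomial factors) must enter correctly.

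\emph{Step 3: assembling the exponents.} Substitute Step 2 into Step 1, with $|\mathcal{T}(Q_Y)|\doteq e^{nH_Q(Y)}$.
\begin{itemize}
\item A joint type with $I_Q\le R$ contributes the exponent $H_Q(Y)+\beta(R-I_Q)-\beta\ell-\beta R=H_Q(Y)-\beta(I_Q+\ell)$.
\item A joint type with $I_Q>R$ contributes $H_Q(Y)+R-I_Q-\beta R-\beta\ell=R(1-\beta)+H_Q(Y|X)-\beta\ell$.
\end{itemize}
Since the number of types is polynomial, the exponent of the sum is the maximum over types. This gives the maximum of the two restricted suprema over joint types.

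\emph{Step 4: from types to distributions.} The set $\{I_Q>R\}$ is open and the objective is continuous, so types are dense there and the sparse supremum is attained in the limit. The set $\{I_Q\le R\}$ is closed, and a distribution on its boundary may only be approximable by types with $I_Q$ slightly above $R$. However, at $I_Q=R$ the sparse integrand equals the bulk integrand, so the combined integrand is continuous across the boundary. Hence such a distribution is still approximated within $\epsilon$ by the $\max$ of the two branch values. The converse direction is trivial, since types are themselves admissible distributions. Together these give $\lim\frac1n\log\E[Z_n]=\max\{\psi_{\mbox{\tiny b}},\psi_{\mbox{\tiny s}}\}$, and the existence of the limit follows.
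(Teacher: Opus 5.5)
Your proposal is correct and follows essentially the same route as the paper. Both proofs replace $\bigl(\sum_{Q} N(Q_{XY}|y^n)e^{-n\ell(Q_{XY})}\bigr)^\beta$ by $\sum_{Q}[N(Q_{XY}|y^n)]^\beta e^{-n\beta\ell(Q_{XY})}$ up to polynomial factors, evaluate the binomial moments separately for $I_Q(X;Y)<R$ and $I_Q(X;Y)>R$, then multiply by $e^{nH_Q(Y)}e^{-n\beta R}$ and maximize over types. You differ only in proving the moment estimate yourself rather than citing Lemma~\ref{lem:binomial-moments}, and in spelling out the passage from types to distributions (including the boundary $I_Q(X;Y)=R$, via continuity of the combined integrand), which the paper leaves implicit.
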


For convenience, we define the functional

\begin{equation}
  F(Q_{XY}) := H_{Q}(Y|X) - \beta\ell(Q_{XY}),
  \label{eq:F-def}
\end{equation}
which is the common building block of both branches.
Using the identity, $H_Q(Y) = I_Q(X;Y) + H_Q(Y|X)$, we may present the two
branches of the annealed free energy as

\begin{align}
\psi_{\mbox{\tiny b}}(\beta, R) &= \sup_{\substack{Q_{XY}:\, Q_{X}=P_{X}\\ I_{Q}(X;Y)\leq R}}
\bigl[(1-\beta)I_{Q}(X;Y) +F(Q_{XY})\bigr]
  \label{eq:bulk-obj-F}\\
\psi_{\mbox{\tiny s}}(\beta, R) &= R(1-\beta) + 
\sup_{\substack{Q_{XY}:\, Q_{X}=P_{X}\\ I_{Q}(X;Y)>R}}F(Q_{XY}).
  \label{eq:sparse-obj-F}
\end{align}

\begin{proof}[Proof of Theorem~\ref{thm:annealed}]
We begin with the trivial identity,

\begin{equation}
  \E[Z_{n}(\beta)] = \E\left[\sum_{y^{n}} [P_{Y^{n}|\mathcal{C}}(y^{n})]^{\beta}\right] 
= \sum_{y^{n}}
\E\left\{\left[P_{Y^{n}|\mathcal{C}}(y^{n})\right]^{\beta}\right\}.
\end{equation}
We next focus on $\E[P_{Y^{n}|\mathcal{C}}(y^{n})^{\beta}]$ for a fixed $y^{n}$.
Define $S(y^{n})= \sum_{m=1}^{M} W^{n}(y^{n}|x^{n}(m))$, 
so that $P_{Y^{n}|\mathcal{C}}(y^{n}) = S(y^{n})/M$. Since $M = e^{nR}$ is deterministic,

\begin{equation}
  \E\!\left[P_{Y^{n}|\mathcal{C}}(y^{n})^{\beta}\right]=M^{-\beta} 
\E\left\{[S(y^{n})]^{\beta}\right\} = e^{-n\beta
R}\E\left\{[S(y^{n})]^{\beta}\right\}.
\end{equation}
The problem reduces to computing $\E[S(y^{n})^{\beta}]$.
For each $y^n$, 

\begin{eqnarray}
\E\{[S(y^n)]^\beta\}&=&\E\left[\sum_{\substack{Q_{XY}:\,Q_X=P_X \\
Q_Y=\hat{P}_{y^n}}}N(Q_{XY}|y^n)
\cdot e^{-n\ell(Q_{XY})}\right]^\beta\nonumber\\
&\doteq&\E\left[\sum_{\substack{Q_{XY}:\,Q_X=P_X \\
Q_Y=\hat{P}_{y^n}}}[N(Q_{XY}|y^n)]^\beta
\cdot e^{-n\beta\ell(Q_{XY})}\right]\nonumber\\
&=&\sum_{\substack{Q_{XY}:\,Q_X=P_X \\
Q_Y=\hat{P}_{y^n}}}\E\{[N(Q_{XY}|y^n)]^\beta\}
\cdot e^{-n\beta\ell(Q_{XY})}\nonumber\\
&\doteq&\max_{\substack{Q_{XY}:\,Q_X=P_X \\
Q_Y=\hat{P}_{y^n}}}\E\{[N(Q_{XY}|y^n)]^\beta\}
\cdot e^{-n\beta\ell(Q_{XY})},
\end{eqnarray}
where the dotted equalities are since the sum contains at 
most $\mathrm{poly}(n)$ non-negative terms.
Now, the evaluation of $\E\{N(Q_{XY}|y^n)]^\beta\}$ requires the following lemma,
which is Theorem 4.2 of \cite{merhav2025} and the proof is therein.

\begin{lemma}[Moments of Binomial Enumerator]
\label{lem:binomial-moments}
Let $N \sim \mathrm{Binomial}(e^{nA}, e^{-nB})$ with $A,B > 0$ and $\beta > 0$. Then,

\begin{equation}
\E\{N^{\beta}\}\doteq\left\{\begin{array}{ll}
e^{n\beta(A-B)} & A>B\\
e^{-n(B-A)} & A<B\end{array}\right.
\end{equation}
\end{lemma}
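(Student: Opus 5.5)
We prove Lemma~\ref{lem:binomial-moments} by matching upper and lower bounds on $\E\{N^\beta\}$, treating the two regimes $A>B$ and $A<B$ separately. The mean is $\mu=\E N=e^{n(A-B)}$.

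\emph{Case $A>B$ (mean exponentially large).} The target is $\E\{N^\beta\}\doteq\mu^\beta$, and $N$ should concentrate around $\mu$.
\begin{itemize}[label={}]
\item \emph{Lower bound.} By Chebyshev, $\mathrm{Var}(N)\le\mu$, so $\Pr\{N\ge\mu/2\}\ge 1-4/\mu\to1$. Hence $\E\{N^\beta\}\ge(\mu/2)^\beta(1-o(1))$.
\item \emph{Upper bound, $\beta\le1$.} Jensen (concavity) gives $\E N^\beta\le\mu^\beta$ directly.
\item \emph{Upper bound, $\beta>1$.} Write $\E N^\beta=\int_0^{e^{nA}}\beta t^{\beta-1}\Pr\{N\ge t\}\,dt$. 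Split at $t=2\mu$. The part below $2\mu$ contributes at most $(2\mu)^\beta$. Above $2\mu$, the multiplicative Chernoff bound gives $\Pr\{N\ge t\}\le e^{-c\,t}$, which decays double-exponentially. Since $t\le e^{nA}$, that part is negligible.
\end{itemize}

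\emph{Case $A<B$ (mean exponentially small).} The target is $\E\{N^\beta\}\doteq\mu$: $N$ is zero with overwhelming probability, and the moment is governed by the event $N\ge1$.
\begin{itemize}[label={}]
\item \emph{Lower bound.} Since $N^\beta\ge\mathbf 1\{N\ge1\}$,
\[
\E N^\beta\ge\Pr\{N\ge1\}=1-(1-e^{-nB})^{e^{nA}}\ge \mu-\mu^2/2\doteq\mu
\]
(Bonferroni).
\item \emph{Upper bound, $\beta\le1$.} We have $N^\beta\le N$, so $\E N^\beta\le\mu$.
\item \emph{Upper bound, $\beta>1$.} Use $\E N^\beta=\sum_{k\ge1}k^\beta\Pr\{N=k\}$ together with $\Pr\{N=k\}\le\binom{e^{nA}}{k}e^{-nBk}\le\mu^k/k!$. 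This gives
\[
\E N^\beta\le\sum_{k\ge1}k^\beta\mu^k/k!\le\mu\sum_{k\ge1}k^\beta\mu^{k-1}/k!=\mu\,(1+o(1)),
\]
because $\mu\to0$ and $\sum_k k^\beta/k!<\infty$.
\end{itemize}

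\emph{Main obstacle.} The delicate part is the upper bound for $\beta>1$. In the case $A>B$ one must check that the large-deviation tail cannot inflate the moment. The fact that saves us is that $N$ is bounded by $e^{nA}$, only exponentially large, while the upper-tail probability beyond $2\mu$ is $e^{-\Omega(\mu)}$, double-exponentially small. Their product therefore vanishes. In the case $A<B$ the tail is handled cleanly by the Poisson-type bound $\mu^k/k!$, so nothing further is needed there.

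Combining the two cases gives $\E\{N^\beta\}\doteq e^{n\beta(A-B)}$ for $A>B$ and $e^{-n(B-A)}$ for $A<B$, as claimed. All constants, such as $2^\beta$ and $\sum_k k^\beta/k!$, are subexponential and vanish under $\doteq$.
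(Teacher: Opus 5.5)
Your proof is correct, but there is no paper proof to match it against: the paper does not prove this lemma. It imports it as Theorem~4.2 of \cite{merhav2025}, part of the type-class-enumeration framework the paper also draws on in the Appendix for doubly exponential concentration. Your argument is instead self-contained and elementary:
\begin{itemize}
\item when $\mu\to\infty$, Chebyshev for the lower bound, and for the upper bound Jensen ($\beta\le 1$) or the layer-cake formula with a multiplicative Chernoff tail beyond $2\mu$ ($\beta>1$);
\item when $\mu\to 0$, Bonferroni for the lower bound, and for the upper bound $N^\beta\le N$ ($\beta\le 1$) or the Poisson-type domination $\Pr\{N=k\}\le\mu^k/k!$ ($\beta>1$).
\end{itemize}
The steps check out. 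For $t\ge 2\mu$ the Chernoff bound does give $\Pr\{N\ge t\}\le e^{-t/6}$, and $\sum_{k\ge1}k^\beta\mu^{k-1}/k!\to 1$ as $\mu\to 0$.

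Your route gives more than exponential equivalence. You get two-sided bounds with constants depending only on $\beta$, such as $(\mu/2)^\beta(1-4/\mu)\le\E\{N^\beta\}\le(2\mu)^\beta+O_\beta(1)$, and $\mu-\mu^2/2\le\E\{N^\beta\}\le\mu\sum_{k\ge1}k^\beta/k!$ for $\mu\le 1$. These involve $n$, $A$, $B$ only through $\mu$. That uniformity is what is actually needed when the lemma is applied, in the proof of Theorem~\ref{thm:annealed}, to joint types whose $I_Q(X;Y)$ varies with $n$. The lemma as stated is only an asymptotic claim for fixed $A\neq B$.

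The cited approach has its own advantage: it keeps this moment estimate inside one large-deviations machinery for enumerators. That same machinery supplies the doubly exponential tail and concentration bounds the paper uses in the self-averaging proof.
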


The case $A=B$ corresponds to a non-exponential behavior of $\E\{N^{\beta}\}$
as can be seen by observing the limiting behavior of both cases. (As a side
remark, when
$N$ is $\mathrm{Binomial}(e^{nA}, \lambda e^{-nA})$, for some constant
$\lambda>0$, then in the
limit of $n\to\infty$, $N$
becomes a Poissonian random variable with parameter $\lambda$, that is,
$\mbox{Pr}\{N=k\}\to\frac{\lambda^ke^{-\lambda}}{k!}$, for every non-negative
integer $k$, and therefore, the asymptotic $\beta$th moments are constants).

Lemma \ref{lem:binomial-moments} is now used with the assignments $A=R$ and $B=I_Q(X;Y)$.
Accordingly, one must distinguish between types $\{Q_{XY}\}$ 
for which  $R>I_Q(X;Y)$ as opposed to those with $R<I_Q(X;Y)$. Now,

\begin{eqnarray}
\E\{[S(y^n)]^\beta\}&\doteq&\max_{\substack{Q_{XY}:\,Q_X=P_X \\
Q_Y=\hat{P}_{y^n}}}\E\{[N(Q_{XY}|y^n)]^\beta\}
\cdot e^{-n\beta\ell(Q_{XY})}\nonumber\\
&=&\max\bigg\{\max_{\substack{Q_{XY}:\, Q_{X}=P_{X},~Q_Y=\hat{P}_{y^n}\\ I_{Q}(X;Y)\leq R}}
\E\{[N(Q_{XY}|y^n)]^\beta\}
\cdot e^{-n\beta\ell(Q_{XY})},\nonumber\\
& &\max_{\substack{Q_{XY}:\, Q_{X}=P_{X},~Q_Y=\hat{P}_{y^n}\\ I_{Q}(X;Y)\geq R}}
\E\{[N(Q_{XY}|y^n)]^\beta\}
\cdot e^{-n\beta\ell(Q_{XY})}\bigg\}\nonumber\\
&\doteq&\max\bigg\{\max_{\substack{Q_{XY}:\, Q_{X}=P_{X},~Q_Y=\hat{P}_{y^n}\\ I_Q(X;Y)\leq R}}
e^{n\beta[R-I_Q(X;Y)]}
\cdot e^{-n\beta\ell(Q_{XY})},\nonumber\\
& &\max_{\substack{Q_{XY}:\, Q_{X}=P_{X},~Q_Y=\hat{P}_{y^n}\\ I_{Q}(X;Y)\geq R}}
e^{-n[I_Q(X;Y)-R]}
\cdot e^{-n\beta\ell(Q_{XY})}\bigg\}.
\end{eqnarray}
Since the latter expression involves maximization over $\{Q_{XY}\}$ when
$Q_Y=\hat{P}_{y^n}$ is held fixed, it is clear that it depends on
$y^n$ only via $\hat{P}_{y^n}$. The number of $\{y^n\}$ with
$\hat{P}_{y^n}=Q_Y$ is of the exponential order of $e^{nH_Q(Y)}$. 
Multiplying the per-$y^n$ contributions by $e^{nH_Q(Y)}$ and by $e^{-n\beta R}$,
yields the total contribution of type $Q_Y$, and finally, summing over all $\{Q_Y\}$, which is
exponentially equivalent to maximizing over all $\{Q_Y\}$, yields 
the asserted expression of $\psi(\beta,R)$. The part pertaining to types
with $I_Q(X;Y)\le R$ is identified with $\psi_{\mbox{\tiny b}}(\beta,R)$, and the one 
with $I_Q(X;Y)> R$ is associated with $\psi_{\mbox{\tiny s}}(\beta,R)$.
\end{proof}

For future reference, we need the following definition.

\begin{definition}[Sparse-feasible and bulk-feasible types]
\label{def:feasibility}
For a given rate $R > 0$, a joint distribution $Q_{XY}$ with $Q_X = P_X$ is called:
\begin{itemize}[nosep]
  \item \emph{sparse-feasible} if $I_Q(X;Y) > R$, i.e., it satisfies the constraint
    of the sparse branch optimization;
  \item \emph{bulk-feasible} if $I_Q(X;Y) \leq R$, i.e., it satisfies the constraint
    of the bulk branch optimization.
\end{itemize}
Every $Q_{XY}$ is either sparse-feasible or bulk-feasible (or both, if $I_Q(X;Y)=R$).
\end{definition}

\subsection{The Sparse Branch: Formula and Phase Structure}
\label{subsec:sparse-branch}

The sparse branch of the annealed free energy,

\begin{equation}
  \psi_{\mbox{\tiny s}}(\beta, R)
  = R(1-\beta)+\sup_{\substack{Q_{XY}:\, Q_{X}=P_{X}\\ I_{Q}(X;Y)>R}}
  F(Q_{XY}),
  \label{eq:psi-sparse-def}
\end{equation}
captures the contribution of \emph{sparse-type} codewords: those whose joint type
$Q_{XY}$ with the output $y^n$ satisfies $I_Q(X;Y)>R$.
For fixed $y^n$, such codewords are rarely encountered in a typical codebook of the ensemble.

\begin{definition}[The annealed optimizer $Q_\beta^{\mbox{\tiny s}}$ 
and its mutual information $I^{\mbox{\tiny s}}(\beta)$]
\label{def:Qs}
For each $\beta > 0$, define

\begin{equation}
Q_{\beta}^{\mbox{\tiny s}}(y|x) =
\frac{[W(y|x)]^{\beta}}{\sum_{y'\in\mathcal{Y}}[W(y'|x)]^\beta} \quad x \in \mathcal{X}.
  \label{eq:Qs-def}
\end{equation}
\end{definition}
It can be readily shown that $Q_\beta^{\mbox{\tiny s}}$
maximizes
$F(Q_{XY})=H_Q(Y|X) - \beta\ell(Q_{XY})$ over all $Q_{XY}$ with $Q_X=P_X$.
Accordingly, define

\begin{equation}
  I^{\mbox{\tiny s}}(\beta) := I_{Q^{s}_{\beta}}(X;Y).
  \label{eq:Is-def}
\end{equation}
The curve $R = I^{\mbox{\tiny s}}(\beta)$ marks the point where
$Q_\beta^{\mbox{\tiny s}}$ crosses from
sparse-feasible ($I_{Q^{s}_\beta}(X;Y) > R$, and so, 
$Q_\beta^{\mbox{\tiny s}}$ optimizes the sparse branch)
to bulk-feasible ($I_{Q^{s}_\beta}(X;Y) \leq R$, so $Q_\beta^{\mbox{\tiny s}}$ 
optimizes the bulk branch).\\

\noindent
{\em Closed-form formula.}
By Lemma~\ref{lem:binomial-moments}, when $I_Q(X;Y)>R$ the $\beta$th moment of the
enumerator satisfies $\E\{[N(Q_{XY}|y^n)]^\beta\}\doteq e^{n[R-I_Q(X;Y)]}$
regardless of $\beta$.
The resulting annealed calculation gives

\begin{equation}
\label{eq:sparse-closed}
\psi_{\mbox{\tiny s}}(\beta, R)
=R(1-\beta) + C(\beta)
  \quad\text{whenever } R< I^{\mbox{\tiny s}}(\beta),
\end{equation}
where 

\begin{equation}
\label{Cbeta}
C(\beta) = \sum_{x\in\mathcal{X}}
P_X(x)\log\left(\sum_{y\in\mathcal{Y}}[W(y|x)]^\beta\right). 
\end{equation}
The formula ceases to hold when $R> I^{\mbox{\tiny s}}(\beta)$ as
$Q_\beta^{\mbox{\tiny s}}$
is then bulk-feasible ($I_{Q^{s}_\beta}(X;Y)\leq R$) and hence does not comply
with the rate constraint of the sparse branch.\\

\noindent
{\em Phase structure.}
The sparse branch has its own phase transition at $R=I^{s}(\beta)$:
\begin{itemize}[nosep]
  \item For $R < I^{\mbox{\tiny s}}(\beta)$: $\psi_{\mbox{\tiny s}}(\beta,R)
    = R(1-\beta)+C(\beta)$, the closed-form linear formula.
    The sparse branch is maximized by $Q^{s}_\beta$.
  \item For $R > I^{s}(\beta)$: $Q_\beta^{\mbox{\tiny s}}$ is not sparse-feasible;
    the closed-form $R(1-\beta)+C(\beta)$ does not apply and the sparse branch
    must be evaluated directly from \eqref{eq:psi-sparse-def}.
\end{itemize}

\noindent
{\em Operational meaning.}
The sparse branch governs the behavior of the ensemble average $\E[Z_n(\beta|\mathcal{C})]$
when the latter is inflated by atypical codebooks.
Specifically, when $\psi_{\mbox{\tiny s}}(\beta,R) > \psi_{\mbox{\tiny b}}(\beta,R)$ 
(which occurs for sufficiently small $R$,
as characterized by the phase boundary $R^\star(\beta)$ derived in
Theorem~\ref{thm:phase} below), the annealed free energy satisfies
$\psi(\beta,R) = \psi_{\mbox{\tiny s}}(\beta,R)$,
meaning that $\E[Z_n(\beta|\mathcal{C})]$ is dominated by a sub-exponential fraction
of codebooks with atypical codewords (those containing a sparse-type codeword
with $I_Q(X;Y) > R$).
A typical codebook has $\frac{1}{n}\log Z_n(\beta|\mathcal{C})\approx
\psi_{\mbox{\tiny b}}(\beta,R)$.

\subsection{The Bulk Branch: Operational Meaning}

We now motivate why the bulk branch $\psi_{\mbox{\tiny b}}(\beta,R)$ deserves
a separate study within the analysis of $\psi(\beta,R)$.
As said before,
the bulk branch, $\psi_{\mbox{\tiny b}}(\beta,R)$, is the component of $\psi(\beta,R)$ driven by
\emph{typical} codewords --- those whose joint type with the output satisfies
$I_Q(X;Y)\leq R$, i.e., codewords that are plausible given the code rate.
A typical random codebook contains no sparse-type codewords
(those with $I_Q(X;Y)>R$) with high probability,
so for a typical fixed codebook
$Z_n(\beta|\mathcal{C})\doteq e^{n\psi_{\mbox{\tiny b}}(\beta,R)}$ at the exponential scale.
The annealed free energy $\psi(\beta,R) = 
\max\{\psi_{\mbox{\tiny b}}(\beta,R),\psi_{\mbox{\tiny s}}(\beta,R)\}$
can exceed $\psi_{\mbox{\tiny b}}(\beta,R)$ (when the sparse branch dominates)
because it is inflated by the rare codebooks that happen to contain a sparse-type codeword;
In the language of statistical physics, $\psi_{\mbox{\tiny b}}(\beta,R)$ 
is the \emph{quenched} free energy, which is
the free energy of a typical random code, while $\psi(\beta,R)$ is the \emph{annealed}
free energy (the free energy of the disorder-averaged partition function).
To support this observation,
we now state the self-averaging property of $Z_n(\beta|\mathcal{C})$,
for $\beta\ge 1$ and $R\ge R^\star(\beta)$.
The proof appears in the Appendix.

\begin{theorem}[self-averaging and quenched free energy]
\label{thm:self-averaging}
For $\beta\geq 1$ and $R>R^\star(\beta)$:
\begin{enumerate}[label=(\roman*)]
  \item $\displaystyle\lim_{n\to\infty}\frac{1}{n}\,
        \E\bigl\{\log Z_n(\beta|\mathcal{C})\bigr\}=\psi_{\mbox{\tiny b}}(\beta,R).$
  \item $\displaystyle\frac{1}{n}\log Z_n(\beta|\mathcal{C})
        \xrightarrow{\;\mathrm{a.s.}\;}\psi_{\mbox{\tiny b}}(\beta,R).$
\end{enumerate}
\end{theorem}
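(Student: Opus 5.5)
We prove Theorem~\ref{thm:self-averaging} by sandwiching $\frac1n\log Z_n(\beta|\mathcal{C})$ between matching bounds that hold with probability $1-e^{-n\delta}$ (summable in $n$), and then use a trivial deterministic bound to go from high-probability statements to statements about the expectation and almost-sure convergence. The deterministic bound is obtained as follows. For $\beta\ge1$, every $P_{Y^n|\mathcal{C}}(y^n)\le 1$, so $Z_n(\beta|\mathcal{C})\le Z_n(1|\mathcal{C})=1$. Conversely, by the power-mean inequality, $Z_n\ge |\mathcal{Y}|^{n(1-\beta)}$. Hence $\frac1n\log Z_n\in[(1-\beta)\log|\mathcal{Y}|,0]$ is uniformly bounded. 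Once we show $\Pr\{|\frac1n\log Z_n-\psi_{\mbox{\tiny b}}|>\epsilon\}\le e^{-n\delta(\epsilon)}$, part (ii) follows from Borel--Cantelli, and part (i) follows by bounded convergence.

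\emph{Upper bound.} Write $Z_n=\sum_{y^n}M^{-\beta}S(y^n)^\beta$, where $S(y^n)=\sum_{Q_{XY}}N(Q_{XY}|y^n)e^{-n\ell(Q_{XY})}$. Split $S$ into a bulk part, over types with $I_Q(X;Y)\le R$, and a sparse part, over types with $I_Q(X;Y)>R$. Since $\beta\ge1$, $(a+b)^\beta\le 2^{\beta-1}(a^\beta+b^\beta)$. The bulk contribution $\sum_{y^n}M^{-\beta}S_{\rm bulk}(y^n)^\beta$ has expectation $\doteq e^{n\psi_{\mbox{\tiny b}}}$, exactly as in the proof of Theorem~\ref{thm:annealed}, so Markov's inequality bounds it by $e^{n(\psi_{\mbox{\tiny b}}+\epsilon)}$ with probability $\ge 1-e^{-n\epsilon}$. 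For the sparse contribution, we use $N\in\{0,1,\dots\}$ and $\beta\ge1$: each nonzero sparse enumerator has $N\ge1$, and we cap it by a high-probability bound. First, a union bound over the $e^{nH_Q(Y)}$ outputs of each type $Q_Y$ gives $N(Q_{XY}|y^n)\le n$ for all $y^n$ and all sparse $Q$ with probability $\ge 1-e^{-n\delta}$. Then the sparse sum is at most $\mathrm{poly}(n)\sum_{y^n}M^{-\beta}\sum_{Q}1\{N(Q|y^n)\ge1\}e^{-n\beta\ell(Q)}$. This is linear in the indicators, so its expectation is $\doteq e^{n\psi_{\mbox{\tiny s}}}$, and Markov's inequality applies again. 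This yields $\frac1n\log Z_n\le\max\{\psi_{\mbox{\tiny b}},\psi_{\mbox{\tiny s}}\}+\epsilon$, which is not yet good enough.

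\emph{Main obstacle.} The hypothesis $R>R^\star(\beta)$ means $\psi_{\mbox{\tiny s}}<\psi_{\mbox{\tiny b}}$, so the bound above in fact gives $\frac1n\log Z_n\le\psi_{\mbox{\tiny b}}+\epsilon$. I expect that all the real work is confirming this: that the definition of $R^\star(\beta)$ in Theorem~\ref{thm:phase} indeed entails $\psi_{\mbox{\tiny s}}(\beta,R)<\psi_{\mbox{\tiny b}}(\beta,R)$ for every $R>R^\star$, including the range $R>I^{\mbox{\tiny s}}(\beta)$ where $\psi_{\mbox{\tiny s}}$ has no closed form. I would try a monotonicity argument in $R$ for $\psi_{\mbox{\tiny b}}-\psi_{\mbox{\tiny s}}$, using the fact that $\psi_{\mbox{\tiny b}}$ is nondecreasing in $R$ and that, for $R\ge I^{\mbox{\tiny s}}(\beta)$, the unconstrained maximizer $Q^{\mbox{\tiny s}}_\beta$ is bulk-feasible.

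\emph{Lower bound.} Let $Q^*$ be the (near-)maximizer of the bulk branch, with $I_{Q^*}(X;Y)<R$; by continuity we may perturb $Q^*$ slightly so that the inequality is strict. For each $y^n$ of type $Q^*_Y$, $N(Q^*|y^n)$ is binomial with mean $e^{n(R-I_{Q^*})}$, which is exponentially large. A Chernoff bound shows that $N(Q^*|y^n)\ge\frac12 e^{n(R-I_{Q^*})}$ fails with probability doubly exponentially small. A union bound over the exponentially many $y^n$ then shows that, with overwhelming probability, every $y^n\in\mathcal{T}(Q^*_Y)$ satisfies $P_{Y^n|\mathcal{C}}(y^n)\ge\frac12 e^{-n(I_{Q^*}+\ell(Q^*))}$. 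Summing over $\mathcal{T}(Q^*_Y)$ gives $Z_n\ge e^{n(H_{Q^*}(Y)-\beta(I_{Q^*}+\ell(Q^*))-\epsilon)}\doteq e^{n(\psi_{\mbox{\tiny b}}-\epsilon)}$. Combining this with the upper bound completes the sandwich.
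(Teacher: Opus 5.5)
Your proposal is correct and is essentially the paper's own proof. The upper bound is Markov's inequality plus Borel--Cantelli against the annealed exponent. The lower bound is a doubly exponential Chernoff bound with a union bound over $\mathcal{T}(Q_Y)$. Deterministic bounds on $\frac{1}{n}\log Z_n(\beta|\mathcal{C})$ then pass the result to expectations. Your choice of a single near-maximizer with $I_{Q^*}(X;Y)<R$ strictly is, if anything, more careful than the paper's union bound over all bulk-feasible types, whose enumerator means need not be exponentially large when $I_Q(X;Y)$ is close to $R$.

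The step you single out as the main obstacle is exactly Theorem~\ref{thm:phase}(b), which the paper simply invokes. For $\beta>1$ and $R>R^\star(\beta)\ge I^{\mbox{\tiny b}}(\beta)$ it is the one-liner
$$\psi_{\mbox{\tiny b}}(\beta,R)=\psi_{\mbox{\tiny b,u}}(\beta)=C(\beta)-(\beta-1)R^\star(\beta)>C(\beta)-(\beta-1)R\ge\psi_{\mbox{\tiny s}}(\beta,R).$$
This needs only $\sup F\le C(\beta)$ and no closed form for $\psi_{\mbox{\tiny s}}$; the case $\beta=1$ is trivial since $Z_n\equiv 1$.

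Consequently your bulk/sparse split of $Z_n$ is superfluous: Markov applied to $Z_n$ itself, together with Theorem~\ref{thm:annealed}, already gives $\frac{1}{n}\log Z_n(\beta|\mathcal{C})\le\psi(\beta,R)+\epsilon=\psi_{\mbox{\tiny b}}(\beta,R)+\epsilon$ with probability at least $1-e^{-n\epsilon/2}$. The split also has a small flaw: its cap $N\le n$ would need to be a larger polynomial, since $\E\{N(Q_{XY}|y^n)\}$ can grow polynomially in $n$ for sparse types with $I_Q(X;Y)$ just above $R$.
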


The condition $R>R^\star(\beta)$ is precisely the regime (regions A and B in the
phase diagram) where the bulk branch dominates the annealed free energy:
$\psi(\beta,R)=\psi_{\mbox{\tiny b}}(\beta,R)$ (Theorem~\ref{thm:phase}).
This is needed for the upper bound in the proof: Markov's inequality on
$\E[Z_n]$ gives $(1/n)\log Z_n \leq \psi(\beta,R)$, and this equals
$\psi_{\mbox{\tiny b}}(\beta,R)$ only when $R>R^\star(\beta)$.
The lower bound of the proof holds under the weaker condition $R>I^{\mbox{\tiny
b}}(\beta)$, so region C ($I^{\mbox{\tiny b}}(\beta)<R<R^\star(\beta)$), where
a typical codebook still lives in the bulk phase but the annealed average is
inflated by rare sparse-type codebooks, remains open.
Whether the result also extends to the condensed phase $R\leq I^{\mbox{\tiny
b}}(\beta)$, or to $\beta<1$, likewise remains open.

By the definition of R\'{e}nyi entropy, $H_\beta(Y) = \frac{1}{1-\beta}\log\sum_y P(y)^\beta$,
so $\log Z_n(\beta|\mathcal{C}) = \log\sum_{y^n}P_{Y^n|\mathcal{C}}(y^n)^\beta
= (1-\beta)\,H_\beta(P_{Y^n|\mathcal{C}})$ exactly, for every codebook.
Since $\frac{1}{n}\log Z_n(\beta|\mathcal{C})\to\psi_{\mbox{\tiny b}}(\beta,R)$
a.s.\ (Theorem~\ref{thm:self-averaging}),
the bulk branch encodes the R\'{e}nyi entropy rate of the output mixture at order $\beta$:

\begin{equation}
  \psi_{\mbox{\tiny b}}(\beta, R) \doteq \frac{1-\beta}{n}\,H_\beta(P_{Y^n|\mathcal{C}})
  \label{eq:bulk-renyi}
\end{equation}
for a typical fixed codebook.

\subsection{The Bulk Branch: Formula and Phase Structure}

\begin{definition}[The bulk optimizer $Q_\beta^{\mbox{\tiny b}}$ and its
mutual information $I^{\mbox{\tiny} b}(\beta)$]
\label{def:Qb}
For each $\beta > 0$, define

\begin{equation}
  Q_{\beta}^{\mbox{\tiny b}} = \argmax_{\substack{Q_{XY}:\, Q_{X}=P_{X}}}
\bigl[(1-\beta)I_Q(X;Y) + F(Q_{XY})\bigr],
  \label{eq:Qb-def}
\end{equation}
the unconstrained maximizer of the bulk objective, and set

\begin{equation}
  I^{\mbox{\tiny b}}(\beta) := I_{Q_{\beta}^b}(X;Y).
  \label{eq:Ib-def}
\end{equation}
\end{definition}
Obviously, for $R\geq I^{\mbox{\tiny b}}(\beta)$, 

\begin{equation}
\psi_{\mbox{\tiny b}}(\beta,R)=
\psi_{\mbox{\tiny b,u}}(\beta)
  := \sup_{Q_{XY}:~Q_{X}=P_{X}} \bigl[(1-\beta)I_Q(X;Y) + F(Q_{XY})\bigr].
\end{equation}
The curve $R = I^{\mbox{\tiny b}}(\beta)$ is therefore the \emph{bulk phase boundary}:
for $R \geq I^{\mbox{\tiny b}}(\beta)$ the rate constraint $I_Q(X;Y)\leq R$ is inactive (bulk phase),
while for $R < I^{\mbox{\tiny b}}(\beta)$ it is active (condensed phase).

The next lemma establishes an inequality relation between $I^{\mbox{\tiny
b}}(\beta)$ and $I^{\mbox{\tiny s}}(\beta)$.

\begin{lemma}[Ordering of the two optimizers]
\label{lem:Ib-Is}
For $\beta \geq 1$:

\begin{equation}
  I^{\mbox{\tiny b}}(\beta) \leq I^{\mbox{\tiny s}}(\beta),
\end{equation}
with strict inequality for $\beta > 1$.
At $\beta = 1$, $I^{\mbox{\tiny b}}(1) = I^{\mbox{\tiny s}}(1) = I(X;Y)$.
\end{lemma}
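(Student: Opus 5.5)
The plan is to compare the two optimizers through the standard ``swap the maximizers'' argument. First I rewrite both objectives using the identity \eqref{eq:identity}. With $G(Q_{XY}):=(1-\beta)I_Q(X;Y)+F(Q_{XY})$ denoting the bulk objective, a direct substitution gives
\begin{equation*}
F(Q_{XY}) = H_Q(Y|X)+\beta\sum_{x,y}P_X(x)Q_{Y|X}(y|x)\log W(y|x),
\qquad
G(Q_{XY}) = (1-\beta)H_Q(Y)-\beta D(Q_{Y|X}\|W|P_X).
\end{equation*}
In particular, $F$ is a conditional entropy plus a linear functional of $Q_{Y|X}$. It is therefore strictly concave on $\{Q_{XY}:Q_X=P_X\}$, and its unique maximizer is $Q^{\mbox{\tiny s}}_\beta$ of Definition~\ref{def:Qs}. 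The maximizer $Q^{\mbox{\tiny b}}_\beta$ exists by compactness and continuity. For $\beta>1$, $G$ need not be concave, so I will not assume it is unique and will let $Q^{\mbox{\tiny b}}_\beta$ be any maximizer.

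\emph{Weak inequality.} By optimality of each point for its own objective,
\begin{equation*}
F(Q^{\mbox{\tiny s}}_\beta)\ge F(Q^{\mbox{\tiny b}}_\beta),\qquad
(1-\beta)I^{\mbox{\tiny b}}(\beta)+F(Q^{\mbox{\tiny b}}_\beta)\ge (1-\beta)I^{\mbox{\tiny s}}(\beta)+F(Q^{\mbox{\tiny s}}_\beta).
\end{equation*}
Adding the two inequalities gives $(1-\beta)\bigl[I^{\mbox{\tiny b}}(\beta)-I^{\mbox{\tiny s}}(\beta)\bigr]\ge 0$. For $\beta>1$ this yields $I^{\mbox{\tiny b}}(\beta)\le I^{\mbox{\tiny s}}(\beta)$.

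\emph{The case $\beta=1$.} Here $G=F$, so the bulk and sparse problems coincide. The unique maximizer is $Q^{\mbox{\tiny s}}_1(y|x)=W(y|x)$, which gives $I^{\mbox{\tiny b}}(1)=I^{\mbox{\tiny s}}(1)=I_{P_XW}(X;Y)=I(X;Y)$.

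\emph{Strictness for $\beta>1$.} This is the main step. Suppose $I^{\mbox{\tiny b}}(\beta)=I^{\mbox{\tiny s}}(\beta)$. Then both displayed inequalities must hold with equality. In particular $F(Q^{\mbox{\tiny b}}_\beta)=F(Q^{\mbox{\tiny s}}_\beta)$, and strict concavity of $F$ forces $Q^{\mbox{\tiny b}}_\beta=Q^{\mbox{\tiny s}}_\beta$. So $Q^{\mbox{\tiny s}}_\beta$ would be a maximizer of $G$, and I aim to contradict this with a first-order condition.

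Whenever $W(y|x)>0$ we have $Q^{\mbox{\tiny s}}_\beta(y|x)>0$, so $Q^{\mbox{\tiny s}}_\beta$ lies in the relative interior of the face determined by the support of $W$. Since $Q^{\mbox{\tiny s}}_\beta$ maximizes $F$, the directional derivative of $F$ vanishes along every direction $\Delta Q_{Y|X}$ within that face that keeps each row summing to zero. Hence the directional derivative of $G$ along such a direction reduces to $(1-\beta)$ times that of $I_Q(X;Y)$, namely
\begin{equation*}
(1-\beta)\sum_{x,y}P_X(x)\Delta Q(y|x)\log\frac{Q^{\mbox{\tiny s}}_\beta(y|x)}{Q^{\mbox{\tiny s}}_Y(y)} .
\end{equation*}
For $Q^{\mbox{\tiny s}}_\beta$ to be a local maximizer of $G$, this derivative must vanish for every admissible $\Delta Q$. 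That requires $\log[Q^{\mbox{\tiny s}}_\beta(y|x)/Q^{\mbox{\tiny s}}_Y(y)]$ to be constant in $y$ on the support of each row $x$. Combined with normalization, this forces $X$ and $Y$ to be independent under $Q^{\mbox{\tiny s}}_\beta$, i.e.\ $I^{\mbox{\tiny s}}(\beta)=0$.

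If instead $I^{\mbox{\tiny s}}(\beta)>0$, I can move from $Q^{\mbox{\tiny s}}_\beta$ in a direction that decreases $I$. Because $1-\beta<0$, this strictly increases $G$ to first order, which contradicts maximality. The delicate points are the interior/support bookkeeping and the degenerate case $I^{\mbox{\tiny s}}(\beta)=0$. That case occurs only for channels that are useless under $P_X$, where $I^{\mbox{\tiny b}}=I^{\mbox{\tiny s}}=0$. I would exclude it explicitly as a nondegeneracy assumption ($I(X;Y)>0$ implies $I^{\mbox{\tiny s}}(\beta)>0$ whenever the rows of $W$ differ in their tilted versions).
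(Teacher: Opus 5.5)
Your weak inequality and your $\beta=1$ case are correct and coincide with the paper's proof. The paper combines $G(Q^{\mbox{\tiny b}}_\beta)\ge G(Q^{\mbox{\tiny s}}_\beta)$ with $C(\beta)=\max F\ge F(Q^{\mbox{\tiny b}}_\beta)$, which are exactly your two inequalities. The paper stops there and never proves strictness. Your strictness step goes further, but it contains a false claim. ``Constant in $y$ on the support of each row, combined with normalization, forces independence'' is true only if all rows $W(\cdot|x)$ with $P_X(x)>0$ share one support $S$. In that case $Q^{\mbox{\tiny s}}_\beta(y|x)=c_xQ^{\mbox{\tiny s}}_Y(y)$ on $S$ and $Q^{\mbox{\tiny s}}_Y(S)=1$ give $c_x=1$. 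With row-dependent supports $S_x$, normalization yields only $Q^{\mbox{\tiny s}}_\beta(y|x)=Q^{\mbox{\tiny s}}_Y(y)/Q^{\mbox{\tiny s}}_Y(S_x)$ on $S_x$, which is compatible with dependence. Likewise, your final step assumes that the face $\{D(Q_{Y|X}\|W|P_X)<\infty\}$ contains a direction along which $I_Q(X;Y)$ decreases. It need not contain one.

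This is not mere bookkeeping: strictness is false in this generality, even when $I^{\mbox{\tiny s}}(\beta)>0$. So excluding useless channels does not suffice. Suppose the rows of $W$ have pairwise disjoint supports, e.g.\ the noiseless channel $W(y|x)=1$ for $y=x$ and $0$ otherwise, with nondegenerate $P_X$. Then every $Q_{XY}$ with $G(Q_{XY})>-\infty$ has $I_Q(X;Y)=H(P_X)$, so $G=(1-\beta)H(P_X)+F$ on its domain. Hence $Q^{\mbox{\tiny b}}_\beta=Q^{\mbox{\tiny s}}_\beta$ and $I^{\mbox{\tiny b}}(\beta)=I^{\mbox{\tiny s}}(\beta)=H(P_X)$ for every $\beta>1$.

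What your argument does prove is the right criterion. If $Q^{\mbox{\tiny b}}_\beta\neq Q^{\mbox{\tiny s}}_\beta$, then $F(Q^{\mbox{\tiny b}}_\beta)<C(\beta)$ by uniqueness, and $(\beta-1)[I^{\mbox{\tiny s}}(\beta)-I^{\mbox{\tiny b}}(\beta)]\ge C(\beta)-F(Q^{\mbox{\tiny b}}_\beta)>0$. Moreover, $Q^{\mbox{\tiny b}}_\beta\neq Q^{\mbox{\tiny s}}_\beta$ is guaranteed whenever your stationarity condition fails at $Q^{\mbox{\tiny s}}_\beta$. You therefore need an explicit hypothesis, for instance that the rows of $W$ indexed by the support of $P_X$ share a common support and are not all equal. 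Under that hypothesis your proof is complete.

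The paper's Z-channel violates the common-support condition, yet strictness still holds there for $0<P_X(1)<1$. Stationarity in row $x=1$ would require $Q^{\mbox{\tiny s}}_\beta(0|1)/Q^{\mbox{\tiny s}}_Y(0)=1/P_X(1)$, i.e.\ $P_X(0)=0$.
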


\begin{proof}
By the optimality of $Q_\beta^{\mbox{\tiny b}}$ in the bulk objective:

\begin{eqnarray}
\label{inequalityneededlater}
F(Q_\beta^{\mbox{\tiny b}})+(1-\beta)I^{\mbox{\tiny b}}(\beta)&\geq&
F(Q_\beta^{\mbox{\tiny s}})+(1-\beta)I^{\mbox{\tiny s}}(\beta)\nonumber\\
&=&C(\beta)+(1-\beta)I^{\mbox{\tiny s}}(\beta),
\end{eqnarray}
which for $\beta>1$ is equivalent to

\begin{equation}
I^{\mbox{\tiny s}}(\beta) - I^{\mbox{\tiny b}}(\beta)\geq\frac{C(\beta)-F(Q_\beta^{\mbox{\tiny b}})}{\beta-1} 
\end{equation}
where the right-hand side is clearly non-negative since the denominator
$\beta-1$ is positive and the numerator is non-negative as $C(\beta)$ is the
global maximum of $F(Q_{XY})$, which cannot be smaller than
$F(Q_\beta^{\mbox{\tiny b}})$. Therefore, $I^{\mbox{\tiny s}}(\beta) -
I^{\mbox{\tiny b}}(\beta)$ is non-negative as well.
\end{proof}

\subsection{Phase Structure of the Annealed Free Energy}

The annealed free energy formula (Theorem~\ref{thm:annealed}) holds for all $\beta > 0$.
However, the analysis of the phase boundary between the bulk and sparse branches is restricted
here to the range $\beta \geq 1$, for two reasons.
The first reason is motivational:
In the statistical-mechanics language, $\beta \geq 1$ is the low-temperature regime where
$Z_n(\beta)$ is closely related to R\'{e}nyi entropies
of order at least one.  This range captures the operationally most relevant quantities:
$\beta=1$ (Shannon entropy, resolvability threshold), $\beta = 2$ (collision entropy,
birthday attacks), $\beta\to\infty$ (minimum entropy,
$-\log\max_{y^n}P_{Y^n}(y^n)$), and
general $\beta > 1$ (guessing moments $\E[G(Y^n)^s] \doteq e^{nsH_{1/(1+s)}}$~\cite{arikan1996}.
The second reason is that the range $0< \beta<1$ does not appear to lend
itself to closed-form analysis, the main issue being the lack of guarantee
concerning the uniqueness of the
solution $R$ to the equation $\psi_{\mbox{\tiny
b}}(\beta,R)=\psi_{\mbox{\tiny s}}(\beta,R)$, which is the phase boundary of
the total annealed free energy, as will be defined next. In particular,
for $0 < \beta < 1$, the branches $\psi_{\mbox{\tiny b}}(\beta,R)$
and $\psi_{\mbox{\tiny s}}(\beta,R)$ may or may
not cross, and existence and uniqueness of a crossing are not established in general.

\begin{definition}[Annealed Phase Boundary]
\label{def:Rstar}
For $\beta \geq 1$, the \emph{annealed phase boundary} $R^\star(\beta)$ is the unique rate
at which the bulk and sparse branches are equal:

\begin{equation}
  \psi_{\mbox{\tiny b}}\bigl(\beta, R^\star(\beta)\bigr) = \psi_{\mbox{\tiny
s}}\bigl(\beta, R^\star(\beta)\bigr).
  \label{eq:Rstar-def}
\end{equation}
\end{definition}
Existence and uniqueness of $R^\star(\beta) \in (I^{\mbox{\tiny b}}(\beta),
I^{\mbox{\tiny s}}(\beta))$
are established in Theorem~\ref{thm:phase}(b) below, together with an explicit
formula.  

To facilitate the need to keep track of the various phase boundaries and the
quantities associated with them so far, the following table summarizes those
ingredients. 

\noindent\textbf{Summary of the two optimizers.}
\begin{center}
\renewcommand{\arraystretch}{1.4}
\begin{tabular}{lp{5.0cm}p{5.0cm}}
\toprule
 & $Q_\beta^{\mbox{\tiny b}}$ & $Q_\beta^{\mbox{\tiny s}}$ \\
\midrule
Definition & Unconstrained maximizer of $\psi_{\mbox{\tiny b}}(\beta,R)$ objective (eq.~\eqref{eq:Qb-def}) & 
Gibbs distribution: $Q_\beta^{\mbox{\tiny s}}(y|x) \propto [W(y|x)]^\beta$ per letter (eq.~\eqref{eq:Qs-def}) \\
Maximizes & $H_Q(Y) - \beta\bigl[I_Q(X;Y) + \ell(Q_{XY})\bigr]$ & $H_Q(Y|X) - \beta\ell(Q_{XY}) = F(Q_{XY})$ \\
MI at optimum & $I^{\mbox{\tiny b}}(\beta)$ & $I^{\mbox{\tiny s}}(\beta)$ \\
Governs & Bulk condensation boundary $R = I^{\mbox{\tiny b}}(\beta)$ &
Bulk/sparse crossover $R = R^\star(\beta)$ \\
At $\beta=1$ & True channel $W$, $I^{\mbox{\tiny b}}(1)=I(X;Y)$ & True channel
$W$, $I^{\mbox{\tiny s}}(1)=I(X;Y)$ \\
\bottomrule
\end{tabular}
\end{center}
\medskip

The following theorem provides a characterization of the phase structure of
the annealed free energy.

\begin{theorem}
\label{thm:phase}
For any DMC $W$, any $P_{X}$, and any $\beta \geq 1$, $R > 0$:
\begin{itemize}
  \item[(a)] For $R \leq I^{\mbox{\tiny s}}(\beta)$, $Q_\beta^{\mbox{\tiny s}}$ is sparse-feasible and

        \begin{equation}
          \psi_{\mbox{\tiny s}}(\beta, R) = R(1-\beta) + C(\beta).
          \label{eq:sparse-closed-thm}
        \end{equation}
  \item[(b)] (Annealed phase boundary, $\beta \geq 1$.) Let

        \begin{equation}
          R^\star(\beta) := \frac{C(\beta)-\psi_{\mbox{\tiny
b,u}}(\beta)}{\beta-1},
          \quad \beta > 1, \qquad R^\star(1) := I(X;Y).
          \label{eq:Rstar-def-formula}
        \end{equation}
        For all $\beta \ge 1$, $R^\star(\beta)\in[I^{\mbox{\tiny
b}}(\beta),I^{\mbox{\tiny s}}(\beta)]$, it is the unique rate at which
        $\psi_{\mbox{\tiny s}}(\beta, R) = \psi_{\mbox{\tiny b}}(\beta, R)$, and

\begin{equation}
\psi(\beta, R) = \begin{cases} \psi_{\mbox{\tiny s}}(\beta, R) & R <
R^\star(\beta),\\ \psi_{\mbox{\tiny b}}(\beta, R) & R \geq R^\star(\beta). \end{cases}
        \end{equation}
\end{itemize}
\end{theorem}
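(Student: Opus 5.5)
We prove (a) from the global optimality of $Q^{\mbox{\tiny s}}_\beta$ for $F$. For (b), we compare a linear, strictly decreasing sparse branch with a bulk branch that is flat beyond $I^{\mbox{\tiny b}}(\beta)$. The case $\beta=1$ is handled separately.

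\emph{Part (a).} It was noted after Definition~\ref{def:Qs} that $Q^{\mbox{\tiny s}}_\beta$ maximizes $F$ over all $Q_{XY}$ with $Q_X=P_X$, with maximal value $F(Q^{\mbox{\tiny s}}_\beta)=C(\beta)$ (a direct per-letter Gibbs computation). Hence $\psi_{\mbox{\tiny s}}(\beta,R)\le R(1-\beta)+C(\beta)$ for every $R$. For $R<I^{\mbox{\tiny s}}(\beta)$, $Q^{\mbox{\tiny s}}_\beta$ is admissible in \eqref{eq:psi-sparse-def}, so equality holds. At the endpoint $R=I^{\mbox{\tiny s}}(\beta)$ the constraint $I_Q(X;Y)>R$ is open. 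There I will perturb $Q^{\mbox{\tiny s}}_\beta$ slightly in a direction that increases the mutual information, for instance mixing toward a deterministic channel. Continuity of $F$ and of $I_Q$ then shows the supremum still equals $C(\beta)$. This endpoint subtlety (and making sure such a direction exists, i.e.\ $I^{\mbox{\tiny s}}(\beta)$ is not the maximal achievable mutual information) is the one delicate point I expect. If it fails, I would simply state (a) with strict inequality, which suffices for (b).

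\emph{Part (b), $\beta>1$.} First I locate $R^\star(\beta)$.
\begin{itemize}
\item Upper bound: inequality \eqref{inequalityneededlater} reads $\psi_{\mbox{\tiny b,u}}(\beta)\ge C(\beta)+(1-\beta)I^{\mbox{\tiny s}}(\beta)$, which rearranges to $R^\star(\beta)\le I^{\mbox{\tiny s}}(\beta)$.
\item Lower bound: $C(\beta)\ge F(Q^{\mbox{\tiny b}}_\beta)$ gives $C(\beta)-\psi_{\mbox{\tiny b,u}}(\beta)\ge(\beta-1)I^{\mbox{\tiny b}}(\beta)$, i.e.\ $R^\star(\beta)\ge I^{\mbox{\tiny b}}(\beta)$.
\end{itemize}
Next I use three structural facts. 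First, $\psi_{\mbox{\tiny b}}(\beta,\cdot)$ is nondecreasing in $R$, bounded by $\psi_{\mbox{\tiny b,u}}(\beta)$, and equal to it for $R\ge I^{\mbox{\tiny b}}(\beta)$. Second, by (a), $\psi_{\mbox{\tiny s}}(\beta,R)=(1-\beta)R+C(\beta)$ on $R\le I^{\mbox{\tiny s}}(\beta)$, and this line is strictly decreasing. Third, by definition, $(1-\beta)R^\star(\beta)+C(\beta)=\psi_{\mbox{\tiny b,u}}(\beta)$. Three cases follow.
\begin{itemize}
\item $R<R^\star(\beta)$: then $\psi_{\mbox{\tiny s}}(\beta,R)=(1-\beta)R+C(\beta)>\psi_{\mbox{\tiny b,u}}(\beta)\ge\psi_{\mbox{\tiny b}}(\beta,R)$.
\item $R>R^\star(\beta)$: here $R>I^{\mbox{\tiny b}}(\beta)$, so $\psi_{\mbox{\tiny b}}(\beta,R)=\psi_{\mbox{\tiny b,u}}(\beta)$. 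Using only the universal upper bound from (a), $\psi_{\mbox{\tiny s}}(\beta,R)\le(1-\beta)R+C(\beta)<\psi_{\mbox{\tiny b,u}}(\beta)$. This covers $R>I^{\mbox{\tiny s}}(\beta)$ as well.
\item $R=R^\star(\beta)$: both branches equal $\psi_{\mbox{\tiny b,u}}(\beta)$.
\end{itemize}
This gives existence and uniqueness of the crossing, and the case formula for $\psi=\max\{\psi_{\mbox{\tiny b}},\psi_{\mbox{\tiny s}}\}$ from Theorem~\ref{thm:annealed}.

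\emph{Part (b), $\beta=1$.} Here $F(Q_{XY})=-D(Q_{Y|X}\|W|P_X)\le 0$, with equality iff $Q_{Y|X}=W$. Thus $C(1)=0$, $I^{\mbox{\tiny s}}(1)=I^{\mbox{\tiny b}}(1)=I(X;Y)$, and both objectives coincide with $-D$.
\begin{itemize}
\item $R\ge I(X;Y)$: $W$ is bulk-feasible, so $\psi_{\mbox{\tiny b}}=0$.
\item $R>I(X;Y)$: $\psi_{\mbox{\tiny s}}<0$. This needs strictness of the supremum over the non-closed set $\{I_Q(X;Y)>R\}$. I would take the closure $\{I_Q(X;Y)\ge R\}$, which is compact and excludes $W$, so $-D$ is bounded away from $0$ there.
\item $R<I(X;Y)$: by (a), $\psi_{\mbox{\tiny s}}=0$, while $\psi_{\mbox{\tiny b}}<0$ by the same compactness argument on $\{I_Q(X;Y)\le R\}$.
\end{itemize}
Hence $R^\star(1)=I(X;Y)$ is the unique crossing, with the stated case split. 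Apart from the endpoint issue in (a), the argument is bookkeeping once the sparse branch is shown to be linear up to $I^{\mbox{\tiny s}}(\beta)$ and bounded by that same line beyond it.
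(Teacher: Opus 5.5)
Your argument is correct in substance and uses the same ingredients as the paper: $F\le C(\beta)$ with equality at $Q^{\mbox{\tiny s}}_\beta$, flatness of $\psi_{\mbox{\tiny b}}(\beta,\cdot)$ for $R\ge I^{\mbox{\tiny b}}(\beta)$, and \eqref{inequalityneededlater}. The organization differs, and yours is sturdier.

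The paper first solves $\psi_{\mbox{\tiny b}}=\psi_{\mbox{\tiny s}}$ on $(I^{\mbox{\tiny b}}(\beta),I^{\mbox{\tiny s}}(\beta))$. It then excludes crossings on $R\le I^{\mbox{\tiny b}}(\beta)$ and $R\ge I^{\mbox{\tiny s}}(\beta)$. Both exclusions use the strict inequalities $I^{\mbox{\tiny b}}(\beta)<R^\star(\beta)<I^{\mbox{\tiny s}}(\beta)$, which the paper attributes to Lemma~\ref{lem:Ib-Is}, but that proof yields only the non-strict versions. You split directly on $R$ versus $R^\star(\beta)$, using the identity $(1-\beta)R^\star(\beta)+C(\beta)=\psi_{\mbox{\tiny b,u}}(\beta)$ and the universal bound $\psi_{\mbox{\tiny s}}(\beta,R)\le(1-\beta)R+C(\beta)$. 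So you need only the non-strict bounds, which you do prove.

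You also treat $\beta=1$, which the paper's proof does not cover: it divides by $\beta-1$, and the interval is empty. And you catch that the paper's proof of (a) counts $I_Q(X;Y)=R$ as sparse-feasible, although the constraint in \eqref{eq:psi-sparse-def} is strict.

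One correction: (a) with strict inequality does not fully suffice for (b). It yields the case formula for $\psi$ and rules out crossings at $R\ne R^\star(\beta)$. However, your case $R=R^\star(\beta)$ invokes (a) at $R^\star(\beta)$, which is the endpoint whenever $R^\star(\beta)=I^{\mbox{\tiny s}}(\beta)$. Your non-strict bounds do not exclude this, and at $\beta=1$ it always holds; your $\beta=1$ list never evaluates $\psi_{\mbox{\tiny s}}$ at $R=I(X;Y)$. So existence of the crossing genuinely requires $Q^{\mbox{\tiny s}}_\beta$ to lie in the closure of $\{I_Q(X;Y)>I^{\mbox{\tiny s}}(\beta)\}$.

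This can fail. Suppose the rows of $W$ have disjoint supports. Then every $Q\ll W$ has $I_Q(X;Y)=H(P_X)$, hence $I^{\mbox{\tiny b}}(\beta)=I^{\mbox{\tiny s}}(\beta)=R^\star(\beta)=H(P_X)$. The sparse constraint set at $R^\star(\beta)$ is then empty, so $\psi_{\mbox{\tiny s}}=-\infty<\psi_{\mbox{\tiny b}}$ there. No crossing exists, and the strictness claimed in Lemma~\ref{lem:Ib-Is} fails too. The endpoint claim in (a) and the existence part of (b) therefore need a nondegeneracy hypothesis. This is a flaw in the statement, silently shared by the paper's proof, not in your approach.

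On the perturbation itself: mixing toward an arbitrary deterministic channel need not raise $I_Q(X;Y)$. A cleaner sufficient condition uses convexity of $I_Q(X;Y)$ in $Q_{Y|X}$ on the face $\{Q\ll W\}$. That face contains $Q^{\mbox{\tiny s}}_\beta$ in its relative interior, and $F$ is continuous on it. A convex function has a local maximum at a relative-interior point only if it is locally constant there. Hence it suffices that $I_Q(X;Y)$ not be locally constant at $Q^{\mbox{\tiny s}}_\beta$ on this face.
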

The curve $R = R^\star(\beta)$ is the \emph{annealed phase boundary}.

\begin{proof}

Part (a) was already shown earlier (but we include it here as part of
the theorem for completeness since it is used in part (b)):
Observe that
for $R \leq I^{\mbox{\tiny s}}(\beta)$, we have $I_{Q^{s}_\beta}(X;Y) \geq R$, 
so it is sparse-feasible. Since $Q_\beta^{\mbox{\tiny s}}$ achieves the global
maximum, $F(Q_\beta^{\mbox{\tiny s}}) = \sup_{I_Q(X;Y) > R} F(Q_{XY}) = C(\beta)$, 
and so, $\psi_{\mbox{\tiny s}}(\beta,R) = R(1-\beta) + C(\beta)$.

As for part (b), observe that
for $R \in (I^{b}(\beta), I^{s}(\beta))$, both simplified formulas hold simultaneously:
$\psi_{\mbox{\tiny b}}(\beta,R) = \psi_{\mbox{\tiny b,u}}(\beta)$,
since $R > I^{\mbox{\tiny b}}(\beta)$) and $\psi_{\mbox{\tiny s}}(\beta,R) = R(1-\beta)+C(\beta)$ (by part~(a),
since $R < I^{\mbox{\tiny s}}(\beta)$). Setting $\psi_{\mbox{\tiny
b}}(\beta,R) = \psi_{\mbox{\tiny s}}(\beta,R)$ and solving gives immediately

\begin{equation}
  R^\star(\beta) = \frac{C(\beta)-\psi_{\mbox{\tiny b,u}}(\beta)}{\beta-1}.
\end{equation}
Substituting $\psi_{\mbox{\tiny b,u}}(\beta) = F(Q_\beta^{\mbox{\tiny b}}) +
(1-\beta)I^{\mbox{\tiny b}}(\beta)$, we end up with:

\begin{equation}
  R^\star(\beta) = I^{\mbox{\tiny b}}(\beta) +
\frac{C(\beta)-F(Q_\beta^{\mbox{\tiny b}})}{\beta-1}.
  \label{eq:Rstar-split}
\end{equation}
Since $\beta-1 > 0$ and $C(\beta)\ge F(Q_\beta^{\mbox{\tiny b}})$, the second term
is non-negative, giving $R^\star(\beta) \geq I^{\mbox{\tiny b}}(\beta)$, with strict inequality since
$F(Q_\beta^{\mbox{\tiny b}}) < C(\beta)$ strictly for $\beta > 1$ (Lemma~\ref{lem:Ib-Is}).

For the upper bound, recall from the proof of Lemma~\ref{lem:Ib-Is} that

\begin{equation}
  C(\beta)-F(Q_\beta^{\mbox{\tiny b}})\leq (\beta-1)\bigl(I^{s}(\beta) - I^{b}(\beta)\bigr),
\end{equation}
which gives 

\begin{equation}
\frac{C(\beta)-F(Q_\beta^{\mbox{\tiny b}})}{\beta-1}
\leq I^{s}(\beta) - I^{b}(\beta).
\end{equation}
Substituting into~\eqref{eq:Rstar-split}: $R^\star(\beta) \leq I^{\mbox{\tiny
s}}(\beta)$,
again with strict inequality for $\beta > 1$.
Hence $R^\star(\beta) \in (I^{\mbox{\tiny b}}(\beta), I^{\mbox{\tiny
s}}(\beta))$, confirming the formula is valid.

To establish the uniqueness of the solution $R^\star(\beta)$ to the equation
$\psi_{\mbox{\tiny b}}(\beta,R)=
\psi_{\mbox{\tiny s}}(\beta,R)$, it remains to rule out any additional
solutions outside the interval $[I^{\mbox{\tiny b}}(\beta),
I^{\mbox{\tiny s}}(\beta)]$. 
For $R \leq I^{\mbox{\tiny b}}(\beta)$, 
$Q_\beta^{\mbox{\tiny s}}$ is still sparse-feasible (since $R \leq
I^{\mbox{\tiny b}}(\beta) < I^{\mbox{\tiny s}}(\beta)$),
so part~(a) gives $\psi_{\mbox{\tiny s}}(\beta,R) = R(1-\beta)+C(\beta)$.
The bulk constraint is active,
so $\psi_{\mbox{\tiny b}}(\beta,R) \leq \psi_{\mbox{\tiny b,u}}(\beta)$, hence:

\begin{eqnarray}
\Delta(\beta,R)&:=&\psi_{\mbox{\tiny s}}(\beta,R)-\psi_{\mbox{\tiny
b}}(\beta,R)\nonumber\\
&=&R(1-\beta)+C(\beta) - \psi_{\mbox{\tiny b}}(\beta,R)\nonumber\\
&\geq& R(1-\beta)+C(\beta) - \psi_{\mbox{\tiny b,u}}(\beta).
\end{eqnarray}
The right side is positive at
$R=I^{\mbox{\tiny b}}(\beta)$
(as shown above) and increasing as $R$ decreases (since $1-\beta < 0$).
Hence $\Delta(\beta,R) > 0$ for all $R \leq I^{\mbox{\tiny b}}(\beta)$.
For $R\ge I^{\mbox{\tiny s}}(\beta)$, consider the following.
Since $I^{\mbox{\tiny b}}(\beta)\le I^{\mbox{\tiny s}}(\beta)\le R$, 
the bulk constraint is inactive and
$\psi_{\mbox{\tiny b}}(\beta,R)=
\psi_{\mbox{\tiny b,u}}(\beta)$.
Since $R\ge 
I^{\mbox{\tiny s}}(\beta)>R^\star(\beta)$, and 
$\psi_{\mbox{\tiny b,u}}(\beta)=
R^\star(\beta)(1-\beta)+C(\beta)$
(from the formula of $R^\star(\beta)$), we have: 

\begin{equation}
\psi_{\mbox{\tiny b}}(\beta,R)=
\psi_{\mbox{\tiny b,u}}(\beta)=
R^\star(\beta)(1-\beta)+C(\beta)>R(1-\beta)+C(\beta)\ge
\psi_{\mbox{\tiny s}}(\beta,R),
\end{equation}
where the strict inequality uses 
$R>R^\star(\beta)$ and $1-\beta<0$,
and the last inequality is because $\psi_{\mbox{\tiny s}}(\beta,R) =
R(1-\beta)+\sup_{\{Q:~Q_X=P_X,~I_Q(X;Y)>R\}}F(Q_{XY})\le R(1-\beta)+C(\beta)$.
Hence $\Delta(\beta,R) < 0$ for all $R\ge I^{\mbox{\tiny s}}(\beta)$.
\end{proof}

\section{A Numerical Example}
\label{sec:zchannel}

In this section, we provide a numerical example and illustrate the behavior
the two branches of the annealed free energy as well as the phase boundary
curves. 

All phase diagrams are computed for an example of a Z-channel.
The details are as follows. The input and output alphabets are
$\mathcal{X}=\mathcal{Y}=\{0,1\}$ and the crossover probability is
$p = 0.45$, i.e.,
$W(0|0) = 1$, $W(1|0) = 0$, $W(0|1) = 0.45$, and $W(1|1) = 0.55$.
The input type is given by $P_{X} = (0.5, 0.5)$. Accordingly, the resulting
channel mutual information is $I(X;Y) = 0.2441$.

Figure~\ref{fig:phase} plots all phase boundaries together in the relevant
quadrant of the $(\beta, R)$ plane.

\begin{figure}[ht]
  \centering
  \includegraphics[width=0.82\textwidth]{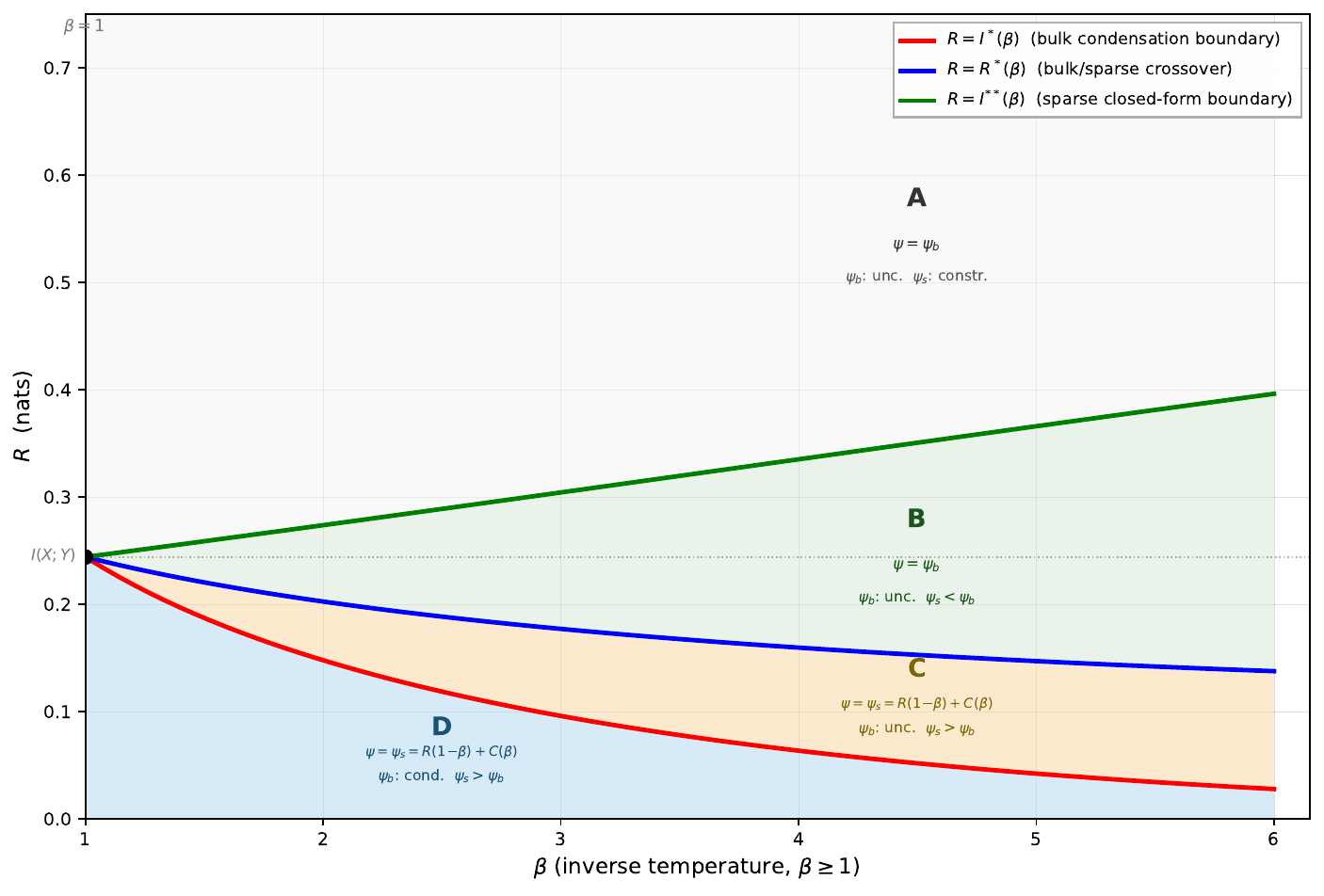}
  \caption{Phase boundaries for the Z-channel, $\beta \geq 1$.
    \textbf{Red:} $R = I^{\mbox{\tiny b}}(\beta)$, 
the bulk branch boundary (bulk rate constraint activates).
    \textbf{Blue:} $R = R^\star(\beta)$, the annealed boundary where bulk = sparse
    (closed-form formula, Theorem~\ref{thm:phase}(b)).
    \textbf{Green:} $R = I^{\mbox{\tiny s}}(\beta)$, 
the sparse branch boundary.
    All three meet at $(\beta,R)=(1,I(X;Y))$ (filled dot).
    Four regions (see text):
    \textbf{A} ($R > I^{\mbox{\tiny s}}(\beta)$): bulk and sparse both unconstrained;
    \textbf{B} ($R^\star(\beta) < R < I^{\mbox{\tiny s}}(\beta)$): 
sparse branch dominates ($\psi(\beta,R)=\psi_{\mbox{\tiny s}}(\beta,R)$);
    \textbf{C} ($I^{\mbox{\tiny b}}(\beta) < R < R^\star(\beta)$): 
bulk branch dominates ($\psi(\beta,R)=\psi_{\mbox{\tiny b}}(\beta,R)$);
    \textbf{D} ($R < I^{\mbox{\tiny b}}(\beta)$): bulk branch dominates, bulk condensed.}
  \label{fig:phase}
\end{figure}

Three phase boundary curves are visible in Figure~\ref{fig:phase}:
\begin{itemize}[nosep]
  \item $R=I^{\mbox{\tiny b}}(\beta)$ (red): the bulk branch boundary, where the unconstrained
    bulk optimizer crosses the rate constraint. At $\beta=1$: $I^{\mbox{\tiny
b}}(1)=I(X;Y)$.
  \item $R=R^\star(\beta)$ (blue): the annealed boundary where the bulk
    and sparse branches are equal, given by the closed-form
    formula~\eqref{eq:Rstar-def-formula}. At $\beta=1$: $R^\star(1)=I(X;Y)$.
    Decreasing in $\beta$, staying above $I^{b}(\beta)$.
  \item $R=I^{\mbox{\tiny s}}(\beta)$ (green): the sparse branch boundary, where 
    $Q_\beta^{\mbox{\tiny s}}$ (which maximizes $F(Q_{XY})=H_Q(Y|X)-\beta\ell(Q_{XY})$) has
    $I_{Q^{s}_\beta}(X;Y)=R$. For $R>I^{\mbox{\tiny s}}(\beta)$ the sparse closed-form ceases to hold.
    At $\beta=1$: $I^{\mbox{\tiny s}}(1)=I(X;Y)$. 
\end{itemize}
All three curves meet at the point $(\beta,R)=(1,I(X;Y))$, 
confirming that the soft-covering threshold
is the unique point where all regions collapse to a single point.

For $\beta \geq 1$, the three phase boundaries divide the quadrant
$\{\beta \geq 1, R > 0\}$ into four regions:\label{rem:three-regions}
\begin{itemize}[nosep]
  \item \textbf{A} ($R > I^{\mbox{\tiny s}}(\beta)$, above all curves):
    $\psi_{\mbox{\tiny b}}(\beta,R)$ is unconstrained ($I^{\mbox{\tiny b}}(\beta)<R$) and
    $\psi_{\mbox{\tiny s}}(\beta,R)$ is at its optimum ($I^{\mbox{\tiny
s}}(\beta)<R$).
    Both branches are unconstrained; $\psi(\beta,R)=\psi_{\mbox{\tiny b}}(\beta,R)$
    (bulk dominates since $R^\star(\beta)<I^{\mbox{\tiny s}}(\beta)$).
  \item \textbf{B} ($R^\star(\beta) < R < I^{\mbox{\tiny s}}(\beta)$):
    Sparse branch dominates: $\psi(\beta,R)=\psi_{\mbox{\tiny s}}(\beta,R)=R(1-\beta)+C(\beta)$
    (closed-form linear formula).
    Rare-event codewords inflate the ensemble average; a typical codebook
    is in the bulk phase.
  \item \textbf{C} ($I^{\mbox{\tiny b}}(\beta) < R < R^\star(\beta)$):
    Bulk branch dominates: $\psi(\beta,R)=\psi_{\mbox{\tiny b}}(\beta,R)$.
    The bulk rate constraint is inactive ($R>I^{\mbox{\tiny b}}(\beta)$); no condensation.
  \item \textbf{D} ($R < I^{\mbox{\tiny b}}(\beta)$, below the red curve):
    Bulk branch dominates with an active rate constraint: condensed phase.
    The bulk optimizer is constrained to $I_Q(X;Y)=R$.
\end{itemize}
For the very same z-channel example, figures~\ref{fig:fe-beta-R1},
\ref{fig:fe-beta-R2}, and
\ref{fig:fe-beta-R3}
display $\psi_{\mbox{\tiny b}}(\beta,R)$ and
$\psi_{\mbox{\tiny s}}(\beta,R)$
as functions of $\beta\geq 1$
for three representative values of $R$ (one below $I(X;Y)$, another one equal
to $I(X;Y)$, and yet another one above $I(X;Y)$)
and contrasts them with the
\emph{i.i.d.\ reference free energy}

\begin{equation}
  \psi_{\mathrm{iid}}(\beta)
  := \frac{1}{n}\log\left(\sum_{y^n} [P_{Y}(y^n)]^\beta\right)
  = \log\left(\sum_{y\in\mathcal{Y}} [P_Y(y)]^\beta\right),
  \label{eq:psi-iid}
\end{equation}
where $P_Y$ is the output distribution induced by $P_X$ and $W$.
This is the free energy of the product distribution $P_Y^{\otimes n}$,
i.e., the distribution to which $P_{Y^n}$ converges for $R>I(X;Y)$
based on well known soft-covering results. Clearly,
$\psi_{\mathrm{iid}}(\beta)$ is a purely smooth function with no phase
transitions whatsoever.

\begin{figure}[H]
\centering
\includegraphics[width=\textwidth]{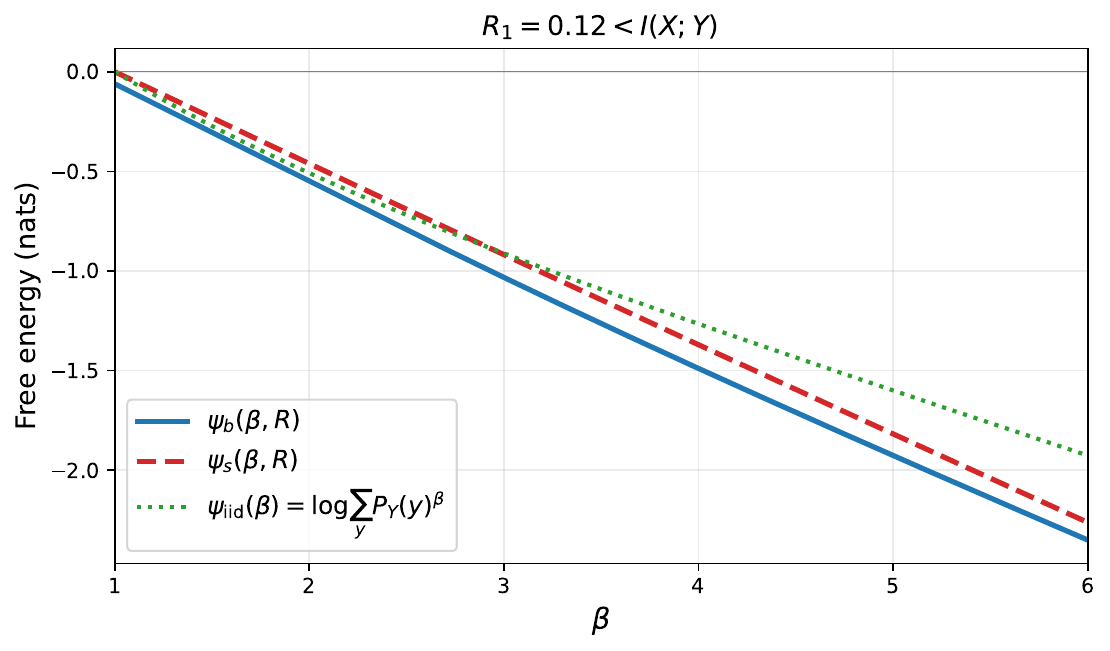}
\caption{$R_1=0.12 < I(X;Y)$. The sparse branch $\psi_{\mbox{\tiny s}}(\beta,R_1)$ lies above
the bulk branch for all $\beta\geq 1$, so the annealed free energy is dominated
by the sparse branch; both lie well above the i.i.d.\ reference,
reflecting ensemble inflation by rare codebooks.}
\label{fig:fe-beta-R1}
\end{figure}

\begin{figure}[H]
\centering
\includegraphics[width=\textwidth]{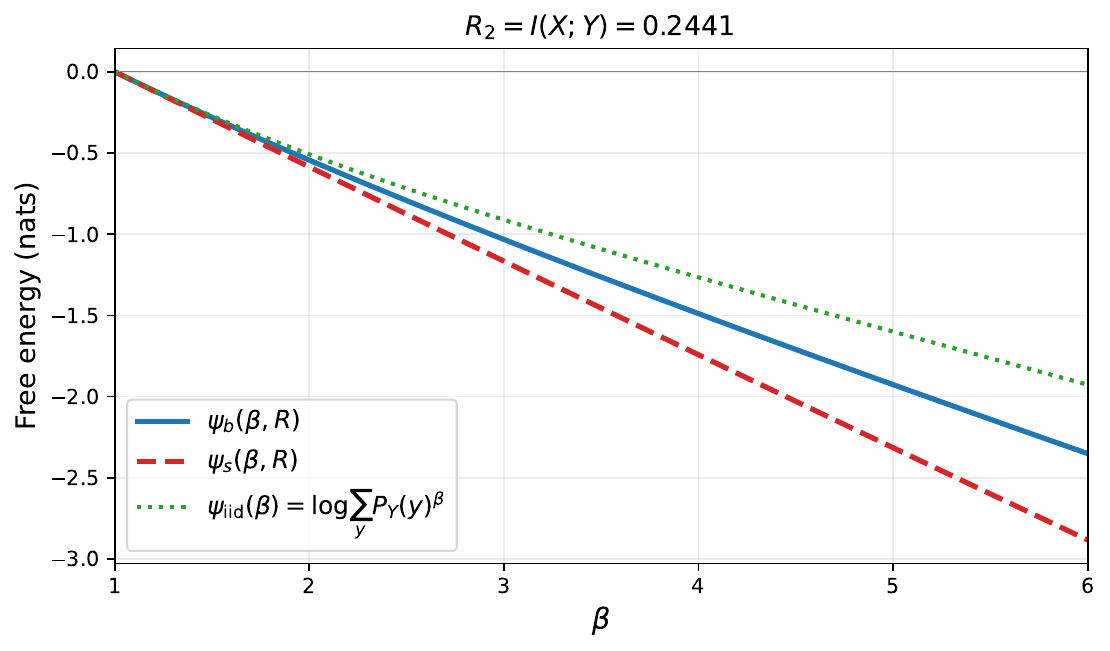}
\caption{$R_2 = I(X;Y) = 0.2441$ nats (the soft-covering threshold).
Both branches and the i.i.d.\ reference all vanish at $\beta=1$
(since $Z_n(1)=1$), and diverge differently for $\beta>1$.}
\label{fig:fe-beta-R2}
\end{figure}

\begin{figure}[H]
\centering
\includegraphics[width=\textwidth]{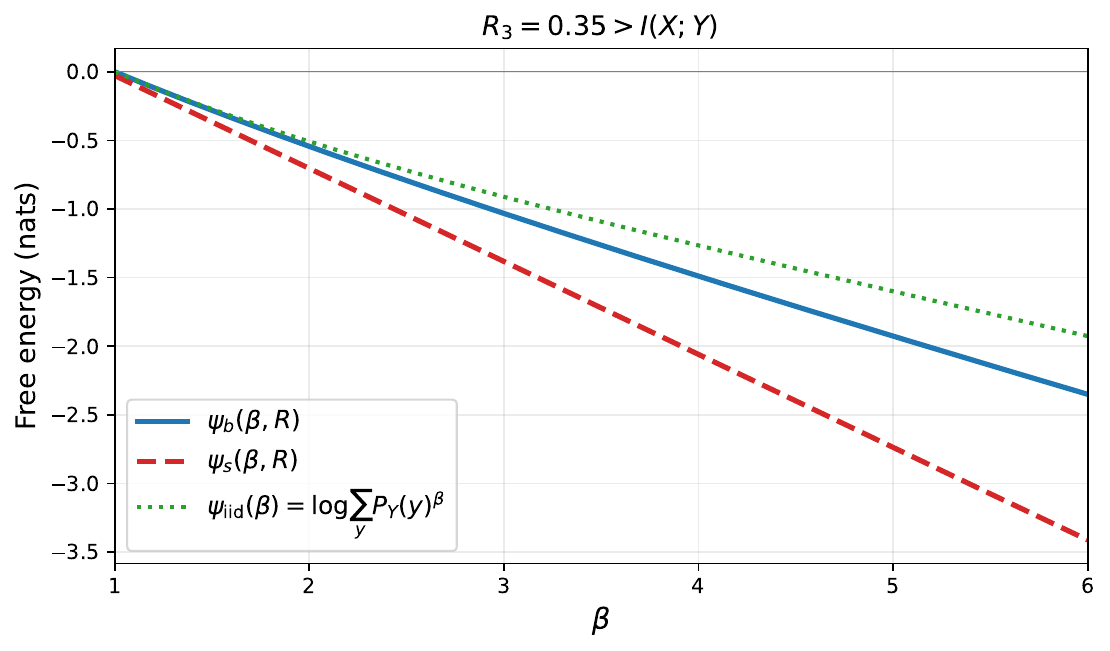}
\caption{$R_3=0.35 > I(X;Y)$. The bulk branch dominates throughout and tracks the
i.i.d.\ reference closely, confirming that above the soft-covering threshold
the output mixture approximates $P_Y^{\otimes n}$ in the R\'{e}nyi sense.
In all three figures, $\psi_{\mathrm{iid}}(\beta)\leq\psi(\beta,R)$,
with strict inequality below $I(X;Y)$: the random-coding mixture has strictly higher
R\'{e}nyi entropy than the i.i.d.\ output distribution at every order $\beta\geq 1$.}
\label{fig:fe-beta-R3}
\end{figure}

A central message of this paper is that the classical soft-covering threshold $R = I(X;Y)$ is
far from the end of the story, and as this numerical example illustrates, we
refine and deepen those results in this work. As mentioned in the
Introduction, classical results, like those of \cite{han1993}, \cite{hayashi2006}, \cite{yu2020} and
\cite{yu2019}, tell us that
$\|P_{Y^{n}} - P_{Y}^{\otimes n}\|_{\TV} \to 0$ (and in other metrics) if and only if $R > I(X;Y)$,
and provide exponential rates for this convergence.
Our statistical-physics analysis reveals that, even well above this threshold, $P_{Y^{n}}$
retains non-trivial internal structure that is invisible to total variation.
Figures~\ref{fig:fe-beta-R1},
\ref{fig:fe-beta-R2}, and
\ref{fig:fe-beta-R3}
make this visible. The three plots of $\psi_{\mbox{\tiny b}}(\beta,R)$,
$\psi_{\mbox{\tiny s}}(\beta,R)$, and the i.i.d.\ reference
$\psi_{\mathrm{iid}}(\beta) = \log\left(\sum_y [P_Y(y)]^\beta\right)$ against $\beta$ for $R_1 < I(X;Y) = R_2 < R_3$.
At the classical threshold $R_2 = I(X;Y)$ (second diagram), all three quantities
vanish together at $\beta = 1$ as required, but diverge for $\beta > 1$: the free energy $\psi(\beta, R_2)$
lies strictly above $\psi_{\mathrm{iid}}(\beta)$ for all $\beta > 1$,
meaning that $P_{Y^n}$ has strictly higher R\'{e}nyi entropy than $P_Y^{\otimes n}$ at every order.
Above the threshold at $R_3 > I(X;Y)$ (third diagram), the bulk branch closely
tracks $\psi_{\mathrm{iid}}(\beta)$, confirming the soft-covering intuition
that the output mixture approaches the i.i.d.\ law --- but the gap, while small,
remains nonzero at finite $\beta$, and quantifying it requires
$\psi_{\mbox{\tiny b}}(\beta, R_3)$.
Below the threshold at $R_1 < I(X;Y)$ (first diagram), the gap $\psi(\beta, R_1) - \psi_{\mathrm{iid}}(\beta)$
is large and grows with $\beta$: the sparse branch dominates the ensemble
average, and the random-coding output is far from i.i.d.\ in the R\'{e}nyi sense
even though TV-distance arguments do not apply here at all.

Specifically, $\psi_{\mbox{\tiny b}}(\beta, R)$ is the R\'{e}nyi entropy rate of $P_{Y^{n}}$ at order $\beta$ for
a typical codebook, and its phase transition at $R = I^{\mbox{\tiny b}}(\beta)$ implies a qualitative
change in how probability mass is distributed over output sequences. In the \emph{bulk phase}
($R > I^{\mbox{\tiny b}}(\beta)$), a typical output sequence is supported by exponentially many codewords---the
output distribution is diffuse and ``spread out'' in a way consistent with the intuition
behind soft-covering. In the \emph{condensed phase} ($R < I^{\mbox{\tiny b}}(\beta)$),
a typical output is supported by only sub-exponentially few codewords,
meaning the mass of $P_{Y^{n}}$ concentrates on a sparse set of sequences,
even though the TV distance to $P_{Y}^{\otimes n}$ may already be negligible.
This condensation is visible in Figure~\ref{fig:fe-beta-R1}: the first diagram ($R_1 < I(X;Y)$) shows
$\psi_{\mbox{\tiny s}}(\beta,R_1) > \psi_{\mbox{\tiny b}}(\beta,R_1)$ for all $\beta \geq 1$,
so the ensemble average is dominated by the rare sparse-branch codebooks,
while the third diagram ($R_3 > I(X;Y)$) shows $\psi_{\mbox{\tiny
b}}(\beta,R_3) > \psi_{\mbox{\tiny s}}(\beta,R_3)$
confirming typical-codebook dominance throughout.

The gap $\psi(\beta,R) - \psi_{\mathrm{iid}}(\beta) \geq 0$ visible in all
three diagrams has a direct operational meaning: it measures how much the
random-coding output mixture exceeds the i.i.d.\ distribution in
R\'{e}nyi entropy at order $\beta$. Classical soft-covering results establish that this
gap vanishes in total variation for $R > I(X;Y)$; The figures show that in R\'{e}nyi entropy
the gap is nonzero for every $\beta > 1$ and every finite $R$, decaying to zero only as $R \to \infty$.

\section{Applications and Implications}
\label{sec:applications}
In this section, we summarize several applications and implications of the
results in this work. Further investigation of these applications is deferred to future work.

\noindent
1.~{\em Refined output statistics.}
Resolvability shows $\|P_{Y^{n}} - (P_{Y})^{n}\|_{\TV} \to 0$ for $R > I(X;Y)$. Our phase diagram reveals that 
even in this regime, $P_{Y^{n}}$ may be in the condensed phase (below
$I^{\mbox{\tiny b}}(\beta)$ for large $\beta$), where mass concentrates on
outputs supported by a sub-exponential number of codewords. This is invisible
to the total variation distance and other distances.\\

\noindent
2.~{\em Guesswork.}
$\E[G(Y^{n})^{s}] \doteq e^{nsH_{1/(1+s)}(P_{Y^{n}})}$~\cite{arikan1996}.
Since $H_\beta(P_{Y^n|\mathcal{C}}) = \frac{1}{(1-\beta)n}\log Z_n(\beta|\mathcal{C})$
and $\frac{1}{n}\log Z_n(\beta|\mathcal{C}) \doteq \psi(\beta,R)$ (the full annealed free energy,
including both bulk and sparse branches), the guessing exponent at $\beta = 1/(1+s)$ is
$\frac{s}{n(1-\beta)}\log Z_n(\beta|\mathcal{C}) \doteq \frac{s}{1-\beta}\psi(\beta,R)$.
The phase structure of $\psi(\beta,R)$ --- both its bulk phase transition at
$R=I^{\mbox{\tiny b}}(\beta)$
and its bulk-sparse crossover at $R=R^\star(\beta)$ --- therefore produces regime changes
in guessing complexity absent in the i.i.d.\ case.\\

\noindent
3.~{\em Security.}
In the condensed phase, even when $P_{Y^{n}} \approx (P_{Y})^{n}$ in total
variation, only a sub-exponential number of codewords generate a typical output. 
A computationally powerful adversary can identify the message by examining which codewords are consistent 
with the observed output. Semantic security may therefore require operating strictly in the bulk phase 
for all $\beta$ relevant to the adversary's test.\\

\noindent
4.~{\em Hypothesis testing and the Chernoff exponent.}
A natural question is how well one can distinguish the random-code output mixture
$P_{Y^n|\mathcal{C}}$ from the i.i.d.\ reference $P_Y^{\otimes n}$.
With $P_Y=\mathrm{Unif}(\mathcal{Y})$ (achieved by choosing $P_X$ appropriately),
this is a direct test of whether soft covering has succeeded.
The annealed Chernoff exponent for this binary hypothesis test equals

\begin{equation}
  \xi(R) = \max_{0\leq\beta\leq 1}\bigl[(1-\beta)\log|\mathcal{Y}| - \psi(\beta,R)\bigr],
  \label{eq:chernoff-psi}
\end{equation}
where $\psi(\beta,R)$ is the annealed free energy of Theorem~\ref{thm:annealed}.
Since both branches of $\psi(\beta,R)$ are fully characterized by our results,
\eqref{eq:chernoff-psi} is completely determined.
The optimizer $\beta^\star\in[0,1]$ satisfies $\psi'(\beta^\star,R)=-\log|\mathcal{Y}|$,
so the phase structure of $\psi(\beta,R)$ directly governs the behavior of $\xi(R)$:
the bulk/sparse phase boundaries produce regime changes in the Chernoff exponent
as $R$ varies.

With $P_Y=\mathrm{Unif}$, two clean consequences follow.
First, $\xi(R)=0$ if and only if $R\geq I(X;Y)$: the two distributions are not
exponentially distinguishable for $R\geq I(X;Y)$, i.e., no test can achieve
an exponentially small total error probability, exactly at the soft-covering threshold.
(Sub-exponential error rates, e.g., polynomial in $n$, are not excluded.)
Second, \eqref{eq:chernoff-psi} is the Legendre--Fenchel transform of $\psi(\beta,R)$
with respect to $\beta$, evaluated at $\log|\mathcal{Y}|$; it therefore equals
$\frac{1}{n}\log\sum_{y^n}P_{Y^n|\mathcal{C}}(y^n)^{\beta^\star}|\mathcal{Y}|^{-n(1-\beta^\star)}$,
which is precisely the R\'{e}nyi divergence
$D_{\alpha}(P_{Y^n|\mathcal{C}}\,\|\,P_Y^{\otimes n})$ at order $\alpha=\beta^\star\in[0,1]$.
This connects \eqref{eq:chernoff-psi} directly to the R\'{e}nyi resolvability results of
Yu and Tan~\cite{yu2019}, who showed that
$D_\alpha(P_{Y^n|\mathcal{C}}\|P_Y^{\otimes n})\to 0$ if and only if $R>I(X;Y)$ for $\alpha\leq 1$,
while for $\alpha>1$ the threshold is $R=I^{\mbox{\tiny b}}(\alpha)$ --- the bulk
condensation boundary of our phase diagram.
Our analysis provides the exact \emph{rate} of divergence at every $R$ and $\alpha$,
and reveals how the two-branch phase structure of $\psi(\beta,R)$ determines
qualitative changes in this rate.

\section{Conclusion}
\label{sec:conclusion}

We developed a statistical-mechanical framework for the output distribution of random codes, 
centered on the annealed free energy $\psi(\beta,R)$ of the partition function 
$Z_{n}(\beta) = \sum_{y^{n}} [P_{Y^{n}}(y^{n})]^{\beta}$ and its two-branch phase structure.
This dichotomy between the two branches has concrete operational consequences. For guesswork, the phase structure of 
$\psi(\beta,R)$ --- both the bulk phase transition at $R=I^{\mbox{\tiny b}}(\beta)$ 
and the bulk-sparse crossover at $R=R^\star(\beta)$ --- translates directly 
(via Ar{\i}kan's formula~\cite{arikan1996}) into regime changes in the exponential 
growth rate of guessing moments $\E[G(Y^{n})^{s}]$. Similar comments apply to
hypothesis testing and perhaps other application areas.
The annealed free energy adds yet another layer: its phase boundary 
$R = R^\star(\beta)$ separates the regime where the ensemble average 
is dominated by rare, atypical codebooks (for example, the sparse branch dominates in
Figure~\ref{fig:fe-beta-R1}) from the regime where it reflects the behavior of
a typical codebook (the bulk branch dominates in Figure~\ref{fig:fe-beta-R3}). 
The gap $\psi(\beta,R) - \psi_{\mbox{\tiny b}}(\beta,R) > 0$ in the 
sparse-dominant regime quantifies the extent to which ensemble 
averages can be misleading---a warning particularly relevant when 
analyzing the security of specific random codes rather than random coding in the mean.

Future research directions include: a rigorous derivation of exponentially tight
error exponents for the hypothesis test of Application~4 via the type-class
enumeration method of~\cite{merhav2025}, together with a phase-diagram analysis
of the Neyman--Pearson exponent tradeoff in the $(R,\tau)$ plane;
a full treatment of the guesswork regime changes produced by the bulk/sparse
phase boundaries; quantitative semantic security bounds in terms of
$I^{\mbox{\tiny b}}(\beta)$ and $R^\star(\beta)$;
extension to Markov and mixed sources; connections to mismatched decoding exponents;
and a complete treatment of the $\beta<1$ regime.
Finally, we note that the self-averaging property of $Z_n(\beta|\mathcal{C})$
has been established in Theorem~\ref{thm:self-averaging}:
for $\beta\geq 1$ and
$R>R^\star(\beta)$, the bulk branch $\psi_{\mbox{\tiny b}}(\beta,R)$
coincides with the quenched free energy
$\lim_{n\to\infty}\frac{1}{n}\E\{\log Z_n(\beta|\mathcal{C})\}$.
Whether the same holds in region C ($I^{\mbox{\tiny b}}(\beta)<R<R^\star(\beta)$),
the condensed phase $R\leq I^{\mbox{\tiny b}}(\beta)$,
or for $\beta<1$, remains an interesting open problem.

\appendix
\section*{Appendix: Proof of Theorem~\ref{thm:self-averaging}}

Throughout this proof, we define 

\begin{equation}
U_n := \frac{1}{n}\log Z_n(\beta|\mathcal{C}).
\end{equation}

We begin with part (i).
To establish the upper bound, observe that by Jensen's inequality,
$\E\{U_n\}\leq\frac{1}{n}\log\E[Z_n(\beta|\mathcal{C})]\to\psi(\beta,R)=\psib$,
where the last equality is by the postulate $R\ge R^\star(\beta)$.
It follows that $\limsup_{n\to\infty}\E\{U_n\}\le\psib$.
As for the lower bound, fix any joint type $Q_{XY}$ with $Q_X=P_X$ and $I_Q(X;Y)\leq R$,
and let $Q_Y$ be its $Y$-marginal. For every output sequence $y^n$
of type $Q_Y$, each codeword of joint type $Q_{XY}$ with $y^n$ contributes
exactly $e^{-n\ell(Q_{XY})}$ to $\sum_m W^n(y^n|x^n(m))$, so

\begin{equation}
  P_{Y^n|\mathcal{C}}(y^n)
  \;\geq\; \frac{N(Q_{XY}|y^n)}{M}\cdot e^{-n\ell(Q_{XY})},
  \label{eq:app:Plb}
\end{equation}
and, since $\beta\geq 1$,

\begin{equation}
  Z_n(\beta|\mathcal{C})
  \;\geq\; \sum_{\substack{y^n:\,\hat{P}_{y^n}=Q_Y}}
    \!\!\left(\frac{N(Q_{XY}|y^n)}{M}\right)^{\!\beta}
    e^{-n\beta\ell(Q_{XY})}.
  \label{eq:app:Zlb}
\end{equation}
As observed before, $N(Q_{XY}|y^n)$ is binomial with $e^{nR}$ trials and a
probability of a single success of the exponential order of
$e^{-nI_Q(X;Y)}$, so its mean $\mu_n$ is exponentially
$e^{n(R-I_Q(X;Y))}\to\infty$.
By the Chernoff bound, for any $\delta\in(0,1)$:

\begin{equation}
  \Pr\left\{N(Q_{XY}|y^n)\leq(1-\delta)\mu_n\right\}
  \leq\exp\left(-\frac{\delta^2\mu_n}{2}\right)
  =\exp\left(-\frac{\delta^2}{2}e^{n[R-I_Q(X;Y)]}\right),
\end{equation}
which is \emph{doubly} exponentially small (see, also Theorem 4.1 of
\cite{merhav2025} and in
particular, eqs.\ (D.16), (D.17) in its proof therein, which establish the
doubly exponential decay).
A union bound over all
joint types $\{Q_{XY}\}$ and all
$|\mathcal{T}(Q_Y)|\leq e^{nH_Q(Y)}$ sequences $y^n$ gives

\begin{equation}
\Pr\{\mathcal{E}_n^{\mbox{\tiny c}}\}
\leq (n+1)^{|\mathcal{X}||\mathcal{Y}|}\cdot e^{nH_Q(Y)}
           \cdot\exp\left(-\tfrac{\delta^2}{2}\,e^{n[R-I_Q(X;Y)]}\right)
  \;\longrightarrow\; 0,
  \label{eq:app:union}
\end{equation}
where $\mathcal{E}_n$ is the event that $N(Q_{XY}|y^n)\geq(1-\delta)\mu_n$
for all types $Q_{XY}$ and all $y^n\in\mathcal{T}(Q_Y)$ simultaneously.
On $\mathcal{E}_n$, the bound~\eqref{eq:app:Zlb} gives

\begin{align}
Z_n(\beta|\mathcal{C})
&\geq |\mathcal{T}(Q_Y)|\cdot
\left(\frac{(1-\delta)\mu_n}{M}\right)^{\!\beta}
e^{-n\beta\ell(Q_{XY})} \notag\\
&\;\doteq\; (1-\delta)^\beta\cdot
\exp\!\Bigl(n\bigl[H_Q(Y)-\beta\bigl(I_Q(X;Y)+\ell(Q_{XY})\bigr)\bigr]\Bigr).
\label{eq:app:Zlb2}
\end{align}
Taking the supremum over all $Q_{XY}$ with $Q_X=P_X$ and $I_Q(X;Y)\leq R$,
we obtain

\begin{equation}
\label{lowerboundonEn}
U_n \;\geq\; \psib + \frac{\log(1-\delta)}{n}
\quad\text{on }\mathcal{E}_n.
\end{equation}
Taking expectations of both sides of \eqref{lowerboundonEn}
and using $U_n\geq(1-\beta)\log|\mathcal{Y}|$ deterministically
(since $Z_n(\beta)\geq|\mathcal{Y}|^{n(1-\beta)}$ by Jensen):

\begin{equation}
\E\{U_n\}
\geq\left(\psib+\frac{\log(1-\delta)}{n}\right)\Pr\{\mathcal{E}_n\}
  +(1-\beta)\log|\mathcal{Y}|\cdot\Pr\{\mathcal{E}_n^{\mbox{\tiny c}}\}.
\end{equation}
Since $\Pr[\mathcal{E}_n]\to 1$, we get $\liminf_{n\to\infty}\E\{U_n\}\geq\psib$,
which together with

\begin{equation}
\limsup_{n\to\infty}\E\{U_n\}\leq\psib, 
\end{equation}
implies

\begin{equation}
\lim_{n\to\infty}E\{U_n\}=\psib, 
\end{equation}
thus completing the proof of part (i).

Proceeding to part (ii),
we begin by proving that $\psib$ is an almost sure upper bound.
By Markov's inequality, for any $\varepsilon>0$:

\begin{equation}
\Pr\left\{U_n \geq \psib+\varepsilon\right\}
\leq \frac{\E[Z_n(\beta|\mathcal{C})]}{e^{n(\psib+\varepsilon)}}
\doteq e^{-n\varepsilon},
\end{equation}
where we have used the fact that $\frac{1}{n}\log\E[Z_n(\beta|\mathcal{C})]\to\psib$,
which holds because $\psi(\beta,R)=\psib$ for $R>R^\star(\beta)$.
Since $\sum_n e^{-n\varepsilon}<\infty$, the Borel--Cantelli lemma yields
$U_n\leq\psib+\varepsilon$ for every $\varepsilon>0$ eventually a.s., or
equivalently, $\limsup_{n\to\infty}U_n\le\psib$ a.s.
For the compatible almost sure lower bound, note that we have already proved
in (\ref{lowerboundonEn}) that whenever $\mathcal{E}_n$ occurs,
$U_n\geq\psib+\frac{\log(1-\delta)}{n}$, which is means that
$\mathcal{E}_n\subseteq \mathcal{F}_n :=\{U_n\geq\psib+\frac{\log(1-\delta)}{n}\}$, or
equivalently, $\mathcal{F}_n^{\mbox{\tiny c}}\subseteq
\mathcal{E}_n^{\mbox{\tiny c}}$. Thus,

\begin{eqnarray}
\sum_{n=1}^\infty\mbox{Pr}\left\{U_n<\psib+\frac{\log(1-\delta)}{n}\right\}=
\sum_{n=1}^\infty\mbox{Pr}\{\mathcal{F}_n^{\mbox{\tiny c}}\}\le
\sum_{n=1}^\infty\mbox{Pr}\{\mathcal{E}_n^{\mbox{\tiny c}}\}<\infty
\end{eqnarray}
where the summability of $\mbox{Pr}\{\mathcal{E}_n^{\mbox{\tiny c}}\}$ is due
to the fact that each term is bounded by a doubly exponentially small
quantity. By the Borel-Cantelli lemma, $U_n-\frac{\log(1-\delta)}{n}\ge\psib$
eventually a.s., or equivalently, $\liminf_{n\to\infty}U_n\ge\psib$ a.s., which
together with the matching upper bound yields $\lim_{n\to\infty}U_n=\psib$
a.s., completing the proof.

\end{document}